\let\emptyset\varnothing
\begin{document}



\markboth{Valdez-Cabrera \& Willis}{Distances between Extension Spaces}

\title{\vspace{-1cm}Distances between Extension Spaces of Phylogenetic Trees}


\author{MAR\'IA A. VALDEZ-CABRERA}
\affil{Department of Biostatistics, University of Washington,\\ Seattle, WA, U.S.A.
\email{mariavc@uw.edu}}

\author{\and AMY D. WILLIS}
\affil{Department of Biostatistics, University of Washington,\\ Seattle, WA, U.S.A.
\email{adwillis@uw.edu}}

\maketitle






\newboolean{DEBUG}
\setboolean{DEBUG}{false}

\newrgbcolor{amycolor}{.5 .1 .99}
\ifthenelse {\boolean{DEBUG}}
{\newcommand{\amy}[1]{{\amycolor{[\@Amy: #1]}}}}
{\newcommand{\amy}[1]{}}

\newrgbcolor{mariacolor}{.392 .585 .93}
\ifthenelse {\boolean{DEBUG}}
{\newcommand{\maria}[1]{{\mariacolor{[\@María: #1]}}}}
{\newcommand{\maria}[1]{}}

\ifthenelse {\boolean{DEBUG}}
{\newcommand{\mariadraft}[1]{{\mariacolor{#1}}}}
{\newcommand{\mariadraft}[1]{#1}}

\begin{abstract}
    Phylogenetic trees summarize evolutionary relationships between organisms, and tools to analyze collections of phylogenetic trees enable contrasts between different genes' ancestry. The BHV metric space has enabled the analysis of collections of trees that share a common set of leaves, but many genes are not shared, even between closely related species. BHV extension spaces represent trees with non-identical leaf sets in a common BHV space, but limited analytical tools exist for extension spaces. We define the distance between two phylogenetic trees with non-identical leaf sets as the shortest BHV distance between their extension spaces, and develop a reduced gradient algorithm to compute this distance. We study the scalability of our algorithm and apply it to analyze gene trees spanning multiple domains of life. Our distance and algorithm offer a fully general, interpretable approach to analyzing both ancient and recent evolutionary divergence. 
\end{abstract}



\section{Introduction}

Phylogenetic trees, which describe shared ancestry between organisms through their branching structure (topology) and branch lengths, are a fundamental tool in biology for analyzing evolutionary relationships. In addition to genetically organizing life, modern phylogenetics enables the prediction of characteristics of uncultivated organisms \citep{RinkeChristian2013Iitp, ParksDonovanH2020Acdt}, lineage tracing of pathogens \citep{chen2016listeria}, forensic analysis \citep{scaduto2010source}, 
and a view into the origins of complex multicellular life \citep{imachi2020isolation, zhu2019phylogenomics}. Phylogenetics studies typically leverage genomic data that encompass multiple genes shared between organisms. However, due to natural biological processes like deep coalescence or horizontal gene transfer, phylogenetic trees that describe different genes frequently exhibit differing topologies and/or 
significantly different branch lengths \citep{Maddison:1997de}. However, as phylogenetic trees are complex data objects, their analysis requires sophisticated mathematical methodology. 

Formally, phylogenetic trees are connected, acyclic, edge-weighted graphs, with each leaf node representing an organism. A phylogenetic tree with leaf labels $\mathcal{N}$ is an element of a BHV metric space \citep{BILLERA2001}. The BHV space $\left(\mathcal{T}^{\mathcal{N}}, d\right)$ is a Hadamard space \citep[Definition 1.1]{Bridson:1999ky}, which ensures the existence of a unique geodesic between any two trees, and the metric $d$ is the length of this geodesic. This complete metric space with continuous paths has enabled powerful analytical tools for the analysis of collections of phylogenetic trees, including means and variances  \citep{blow, bbb, willis2018uncertainty, brown2020mean}, confidence sets \citep{willis2019confidence}, density estimation \citep{weyenberg2014kdetrees}, and clustering \citep{gori2016clustering}. 

One limitation of BHV-based tools is that they are restricted to analyzing trees on the same set of organisms. This substantially limits the usefulness of these tools in practice, especially when analyzing collections of organisms that experience high rates of gene loss and transfer. To address this limitation, \cite{GrindstaffGillian2019RoPL} introduced  \textit{extension spaces}. Informally, for a tree $T$ with leaves $\mathcal{L} \subseteq \mathcal{N}$, the elements of the extension space $E^{\mathcal{N}}_{T} \subset \mathcal{T}^{\mathcal{N}}$ are the trees created by attaching the leaves in $\mathcal{N} \setminus \mathcal{L}$ to $T$ in every possible way. Extension spaces provide a path towards analyzing trees with non-identical leaf sets inside the same BHV space, because $\mathcal{N}$ can be constructed as the union of all trees' leaf sets. 

In addition to introducing extension spaces, \cite{GrindstaffGillian2019RoPL} proposed two measures of ``compatibility'' between trees with non-identical leaf sets. These measures are based on constructing neighborhoods around the trees' extension spaces and determining the smallest radius for these neighborhoods to intersect. Trees in these intersections can be seen as ``supertrees'' that combine the two phylogenies. However, the neighborhoods will never intersect if the extension spaces do not share at least one common topology. Therefore, these measures are only defined for trees that are topologically compatible. 


To address the need for a compatibility measure that is defined for \textit{any} pair of trees, we consider the shortest BHV path between their extension spaces \citep[Section 3.4]{GrindstaffGillian2019RoPL}. That is, we define the shortest distance between two extension spaces as follows:
\begin{equation}
        d\left(E_{T_1}^{\mathcal{N}
}, E_{T_2}^{\mathcal{N}}\right) := \inf_{(t_1,t_2) \in E_{T_1}^{\mathcal{N}
} \times E_{T_2}^{\mathcal{N}
}} d(t_1, t_2).
\label{eq:distanceDefinition}
\end{equation}
In this paper, we propose the first algorithm to compute this distance, leveraging the polynomial-time BHV distance algorithm developed by \cite{OwenMegan2011}. 

This paper is structured as follows. We begin with overviews of BHV tree space in Section \ref{sec:Phylogenetic Tree} and extension spaces in Section \ref{sec:ExtensionSpace}. We then review necessary algorithmic background in Section \ref{sec:OptimizationOverview}. Our main results and our proposed algorithm are detailed in Section \ref{sec:ESOdistance}, culminating in Algorithm \ref{alg:OrthantExtensionDistance}. We illustrate the runtime and scalability of our method using simulated data examples in Section \ref{sec:Experiments} and demonstrate the applicability of our algorithm with a comparison of two gene trees in Section \ref{sec:RealDataAnalysis}. We conclude with a discussion in Section \ref{sec:conclusions}. Additional details pertaining to our algorithm can be found in Section \ref{sec:AlgDets}.


\section{Background}
\label{sec:Phylogenetic Tree}

A phylogenetic tree is a graphical representation of the evolutionary history of a set of organisms. Its topological shape describes divergence events, and the lengths of its edges indicate the distance (e.g., time or number of genetic mutations) between these events.  

\begin{definition}
    \label{def:PhyloTree}
    A \textbf{phylogenetic tree} $T$ on a set of organisms $\mathcal{L}$ is a connected weighted acyclic graph whose \textbf{leaves} (nodes of degree 1) are labelled by the elements of $\mathcal{L}$, and all interior nodes are at least of degree 3. All edge-weights are non-negative values referred to as \textbf{lengths}. The branching pattern of the graph gives the \textbf{topology} of the tree. 
\end{definition}


\begin{remark}
    Sometimes, phylogenetic trees have a node that represents the \textbf{root} of the tree, that is, a common ancestor to all organisms in $\mathcal{L}$. Throughout this paper, we will discuss unrooted trees (Definition \ref{def:PhyloTree}), but our results apply to rooted trees without loss of generality. 
\end{remark}

Removing an edge from a tree $T$ divides it into two connected graphs, each with a subset of its leaves. Therefore, each edge in $T$ can be uniquely identified by the partition $\mathcal{L} = \mathcal{G} \sqcup \{\mathcal{L} \setminus \mathcal{G}\}$ induced when removed. 
Throughout we assume $\mathcal{G}$ to be the smaller subset ($|\mathcal{G}| \leq |\mathcal{L} \setminus \mathcal{G}|$) and denote the partition by $\mathcal{G} \mid \mathcal{L} \setminus \mathcal{G}$. We call the partition identifying an edge its \textbf{split}, and we use edge and split interchangeably.  Note that $|\mathcal{G}| = 1$ for external edges, and $2 \leq |\mathcal{G}|$ for internal edges. Edges that induce the same partition on topologically distinct trees are considered to be the same edge. 

The topology of a tree is fully defined by its internal splits. We use $\mathcal{S}(T)$ to refer to the set of internal splits in $T$ (equivalently, the topology of $T$). For a tree with $l$ leaves, the cardinality of this set falls in $\{0, 1, \ldots, l-3\}$, 
and the tree is fully resolved (binary) when it has $l-3$ internal edges. Additionally, $T$ has $l$ external edges, one per leaf. We use $\mathcal{H}(T)$ to refer to the set of external edges and $\mathcal{P}(T)$ to the set of all edges of $T$; so $ \mathcal{P}(T) = \mathcal{S}(T) \cup \mathcal{H}(T)$.

\subsection{BHV tree space}
\label{sec:TreeSpace}
\cite{BILLERA2001} proposed a complete, geodesic metric space of non-positive curvature for phylogenetic trees that share a leaf set. The metric distance, commonly referred to as the \textit{BHV distance}, captures both topological and edge-length differences between two trees. Unique geodesics connect trees in BHV space  \citep[Definition 1.3]{Bridson:1999ky}, facilitating the development of BHV-based methods for analyzing collections of phylogenetic trees \citep{bbb,nye11,weyenberg2014kdetrees,gori2016clustering,willis2018uncertainty,willis2019confidence, nyepca,brown2020mean,teichman2023analyzing}. This paper presents an algorithm to find distances between subspaces of the BHV space, where these subspaces are generated by trees with some absent leaves. 

Consider all possible topologies for trees with leaf set $\mathcal{N}$ of size $n = |\mathcal{N}|$. For each topology, order its internal splits $S = \{s_1, s_2, \hdots, s_m\}$, where $s_i = \mathcal{G}_i \mid \mathcal{N} \setminus \mathcal{G}_i$. Ordering can be assigned to the splits in any consistent way; for example, an order can be assigned to the leaves in $\mathcal{N}$ and the edges may inherit the lexicographic order on the subsets $\left\{\mathcal{G}_i\right\}_{i=1}^{m}$. A tree with a given topology may be represented by a $(n+m)-$dimensional non-negative vector, where the first $n$ coordinates represent the lengths of the external edges and the last $m$ coordinates represent the lengths of the internal branches  (Fig.~1). Thus, all trees with topology $S$ can be represented by a $(n+m)-$dimensional non-negative vector, which together form an orthant in $\mathbb{R}^{n+m}$. We call this set of vectors the \textbf{topology orthant}, denoted by $\mathcal{O}(S)$  \citep[Section 2.1]{OwenMegan2011}. We also employ the notation $\mathcal{O}(T)$ to refer to the lowest-dimensional orthant containing $T$; that is, $\mathcal{O}(T) = \mathcal{O}(S(T))$. For a topology orthant $O$, we let $\mathcal{S}(O)$ denote the set of internal splits that define the topology of $O$.

\begin{figure}
    \label{fig:OneTopology}
    \centering
    \subfigure[]{\includegraphics[width=0.40\textwidth]{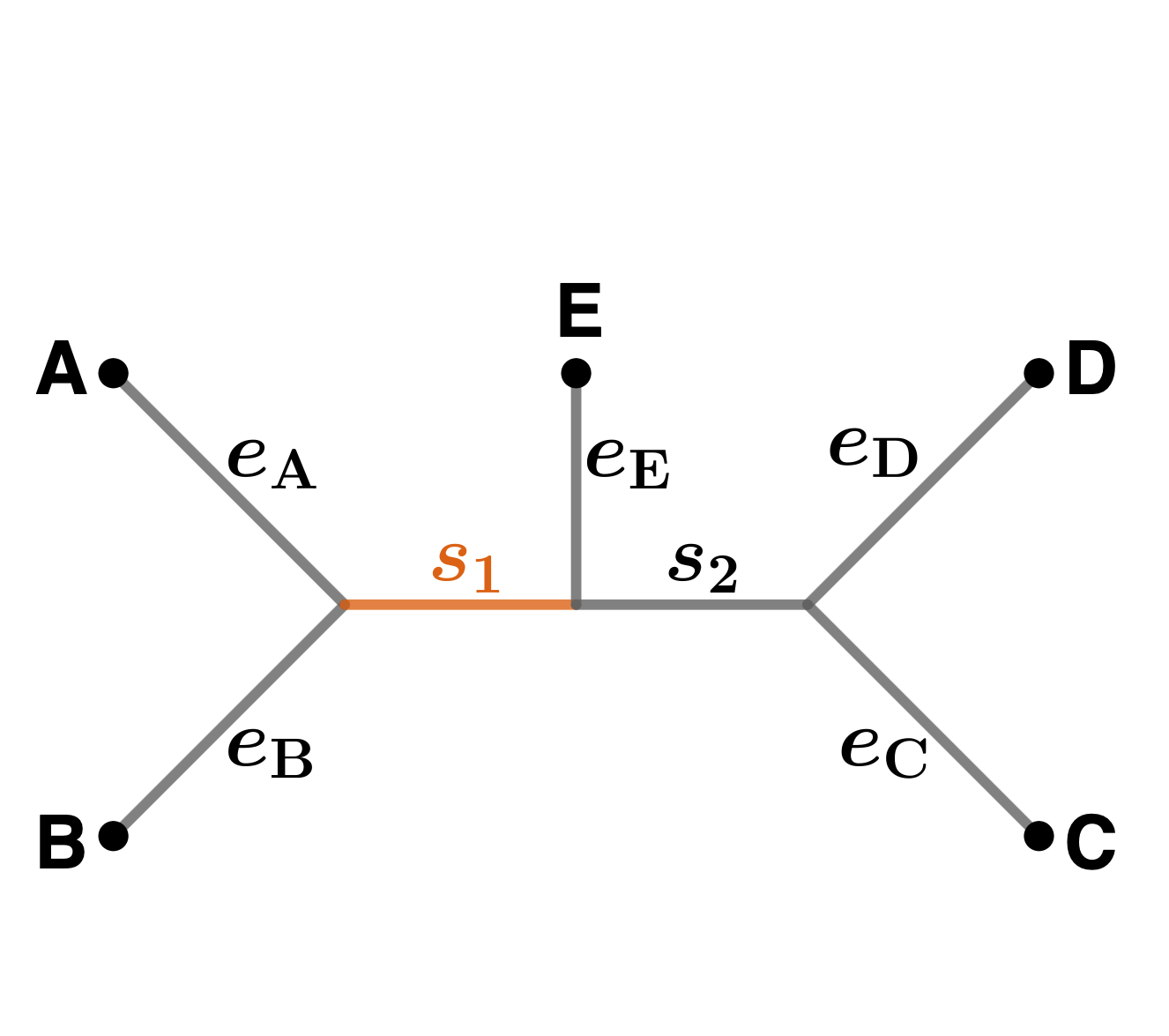}} 
    \subfigure[]{\includegraphics[width=0.46\textwidth]{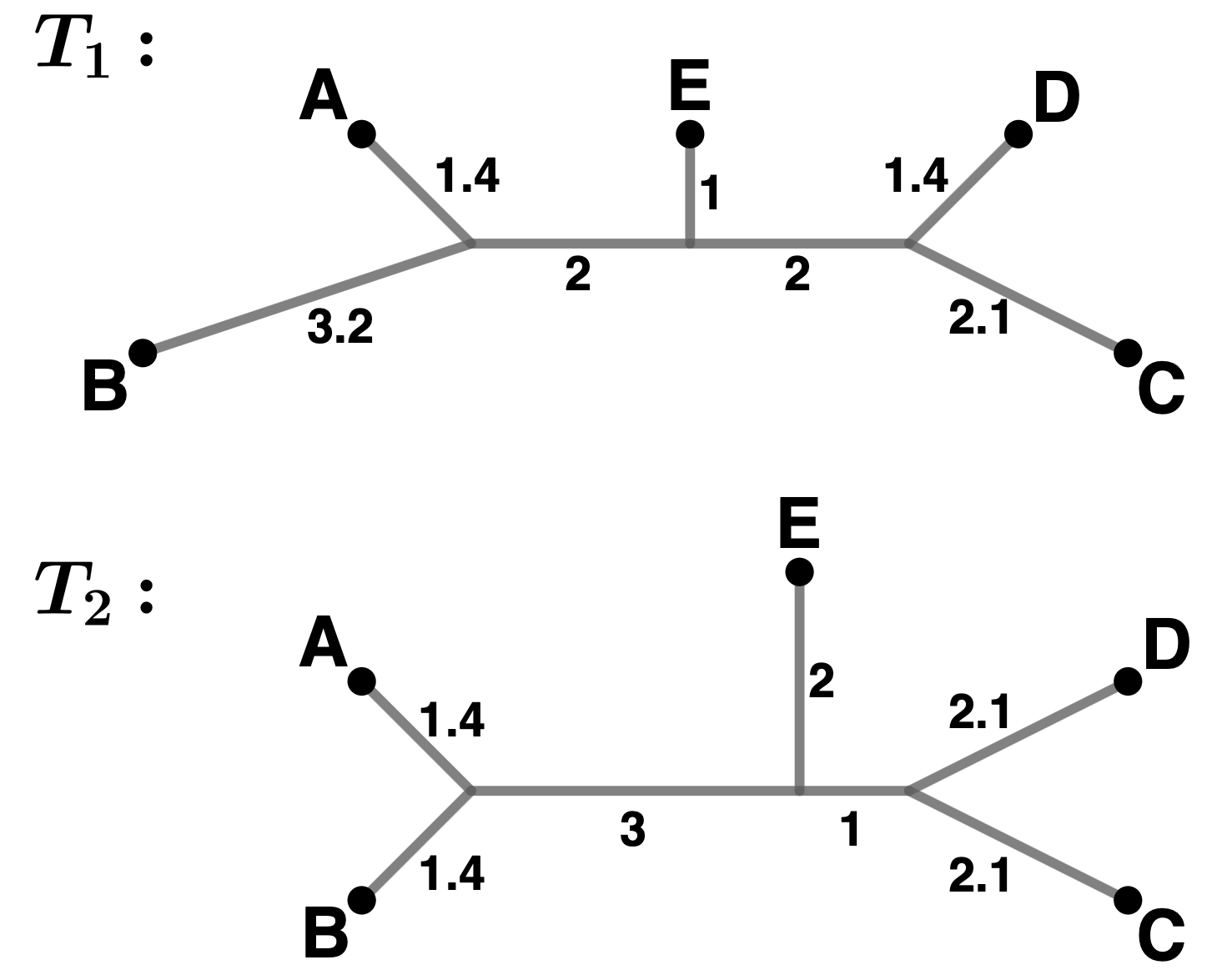}} 
    \caption{(a) A phylogenetic tree with leaf set $\mathcal{N} = \{A, B, C, D, E\}$ and internal split $s_1 = \{A,B\} \mid \{C,D,E\}$ highlighted in orange. This topology can be summarized by 7-dimensional vectors $(e_A, e_B, e_C, e_D, e_E, s_1, s_2)$. (b) Two different trees with the same topology. In its topology orthant, $T_1$ corresponds to the vector $(1.4, 3.2, 2.1, 1.4, 1, 2, 2)$ while $T_2$ corresponds to the vector $(1.4, 1.4, 2.1, 2.1, 2, 3, 1)$.}
\end{figure}

To connect the topology orthants, we ``glue'' them along appropriate equivalence classes. We consider any tree with an internal edge of length zero equivalent to the tree without that edge but with all other edges in common and of the same length (Fig.~2). Given a set of internal splits $S$ with $m$ elements, any topology orthant $\mathcal{O}(S')$ corresponding to a proper subset $S' \subset S$ of these internal edges can be viewed as a \textbf{face} of the larger topology orthant $\mathcal{O}(S)$  \citep[Section 2]{BILLERA2001}. 
We use the notation $O' \subset O$ to refer to $O'$ as a face of $O$. Two topology orthants $\mathcal{O}(S_1)$ and $\mathcal{O}(S_2)$ such that $S := S_1 \cap S_2 \neq \emptyset$ are glued along the face corresponding to the topology orthant $\mathcal{O}(S)$ (Fig.~2(b)). The only topology orthants that are not faces of another topology orthant are those of maximum-dimension $2n-3$, which correspond to topologies of binary trees. 

A path can be traced between any two trees, where a path is a piecewise continuous curve with each piece being fully contained within a single orthant. Moreover, BHV space is a complete geodesic space with non-positive curvature  \citep[Lemma 4.1]{BILLERA2001}, and thus the shortest path between any two trees (the \textbf{geodesic}) is unique. The \textbf{distance} between two trees $d(T_1, T_2)$ is the length of the geodesic connecting them. 
As a consequence of $(\mathcal{T}^{\mathcal{N}}, d)$ being a complete metric space of non-positive curvature, the function $d: \mathcal{T}^{\mathcal{N}} \times \mathcal{T}^{\mathcal{N}} \mapsto \mathbb{R}$ is convex on $\mathcal{T}^{\mathcal{N}} \times \mathcal{T}^{\mathcal{N}}$  \citep[Corollary 2.5]{sturm}. 

\begin{remark}
\label{Remark:euclideanOrthants}
    In the BHV space the geodesic between two trees in the same orthant is the line segment between them contained in the orthant. Thus, the BHV metric coincides with the Euclidean metric when restricted to a single orthant. An orthant can be viewed as a Euclidean subspace of $\mathcal{T}^{\mathcal{N}}$. This extends to the product of two orthants, i.e., $O_1 \times O_2$ is an Euclidean subspace of $\mathcal{T}^{\mathcal{N}} \times \mathcal{T}^{\mathcal{N}}$.
\end{remark}

\subsection{Geodesics in BHV space}
\label{sec:OwenProvan}

In addition to proposing a polynomial-time algorithm to compute geodesics in the BHV space, \cite{OwenMegan2011} give a closed-form expression for the length of these geodesics. This will be useful to describe how BHV distances are influenced by small changes to the endpoint trees.

\begin{definition}
    \label{Def:PathSpace}
     \citep[Definition 3.3]{owen1} For trees $T_1$ and $T_2$, consider the sets of internal splits $S_1 = \mathcal{S}(T_1)$, $S_2 = \mathcal{S}(T_2)$ and the set of common internal splits $C = S_1 \cap S_2$. For a sequence of subsets $S_1 = G_0 \supset G_1 \supset \hdots \supset G_{k-1} \supset G_{k} = C$ and $C = F_0 \subset F_1 \subset \hdots \subset F_{k-1} \subset F_{k} = S_2$ such that $O_i = \mathcal{O}(G_i \cup F_i)$ are valid topology orthants for all $i \in \{0,\hdots, k\}$, then the sequence of orthants,
\begin{equation*}
   \mathcal{O}(T_1) = O_0  \rightarrow O_1 \rightarrow \hdots \rightarrow O_k = \mathcal{O}(T_2),
\end{equation*}
is a \textbf{path space} from $T_1$ to $T_2$. 
\end{definition}

\begin{definition}
    \label{Def:Support}
     \citep[Section 2.3]{OwenMegan2011} Consider a path space $O_0 \rightarrow \hdots \rightarrow O_k$. When transitioning from $O_{i-1}$ to $O_{i}$, denote by $A_i = G_{i-1} \setminus G_{i}$ the set of dropped edges and by $B_i = F_{i} \setminus F_{i-1}$ the set of added edges. For ordered sets $\mathcal{A} = \left\{A_1, \hdots, A_k\right\}$ and $\mathcal{B} = \left\{B_1, \hdots, B_{k}\right\}$, the pair $(\mathcal{A}, \mathcal{B})$ gives the \textbf{support} of the path space. 
\end{definition}

\begin{figure}
    \label{fig:OrthantFaces}
    \centering
    \subfigure[]{\includegraphics[width=0.45\textwidth]{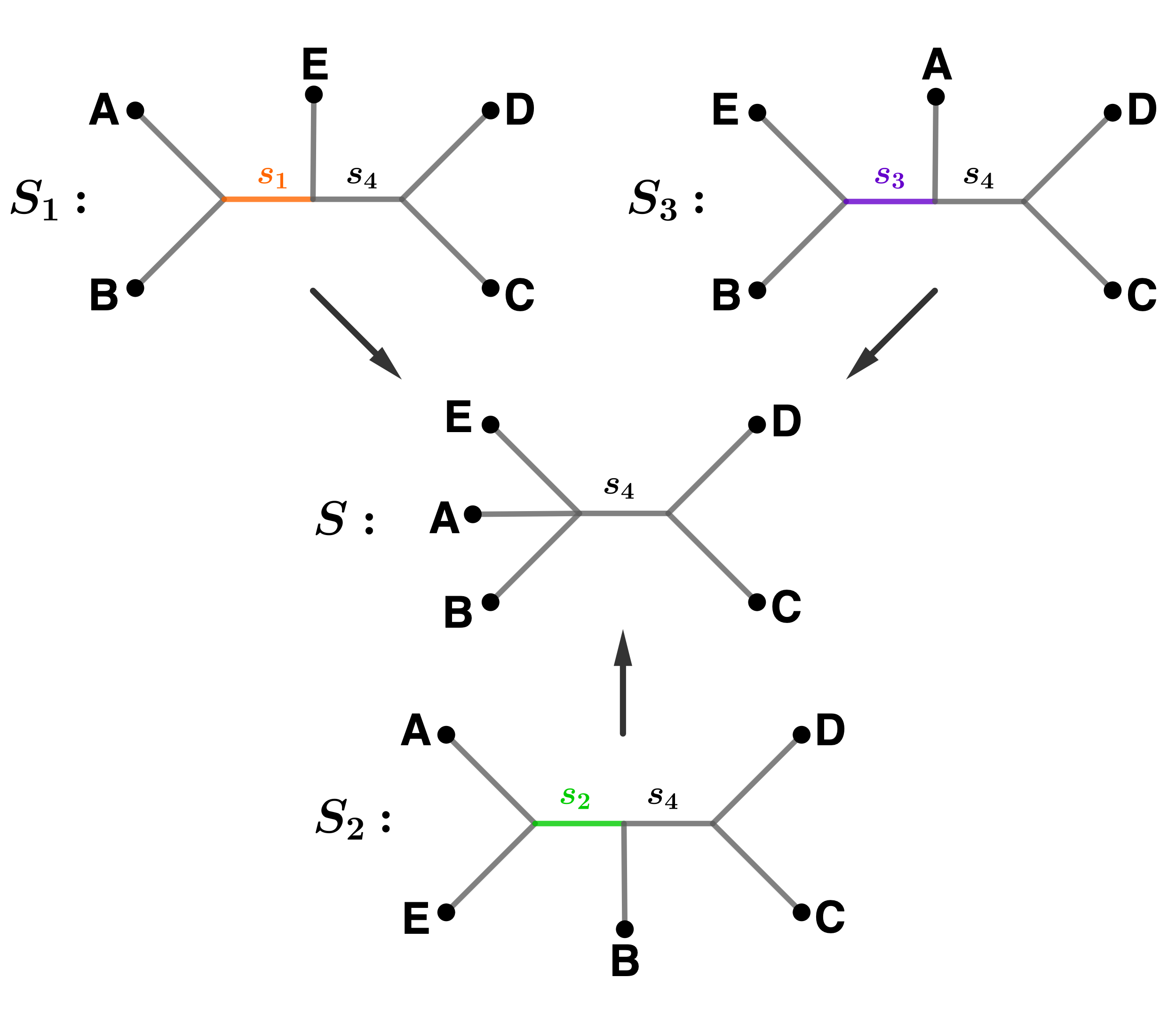}}
    \subfigure[]{\includegraphics[width=0.49\textwidth]{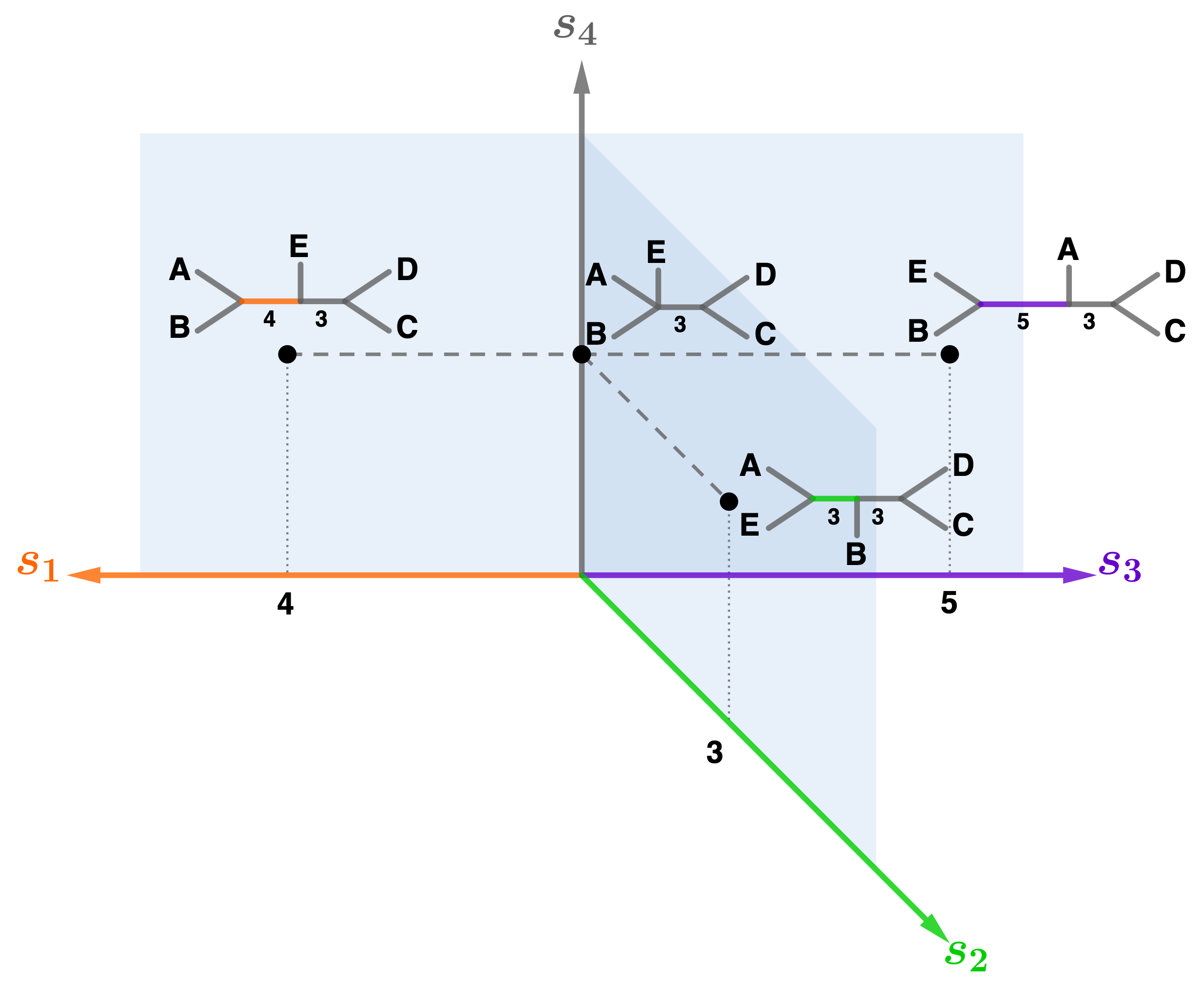}} 
    \caption{(a) Topologies $S_1$, $S_2$ and $S_3$ share topology $S$ as a face. 
    The topology $S_i$ is composed of internal edges $S_i = \{s_i, s_4\}$, while the topology $S$ only has $S = \{s_4\}$; thus $\mathcal{O}(S)$ is a face of $\mathcal{O}(S_i)$ for $i=1, 2, 3$. (b) The topology orthants $\mathcal{O}(S_1)$, $\mathcal{O}(S_2)$ and $\mathcal{O}(S_3)$ are ``glued'' at $\mathcal{O}(S)$. The dimensions arising from external edges are not shown.}
\end{figure}

A path space describes a way to move from the topology orthant of $T_1$ to the topology orthant of $T_2$ through connected orthants. A key property of path spaces is that the common edges in $C = S(T_1) \cap S(T_2)$ are in the topologies for all orthants in the path, while uncommon edges in $S_1$ (corresponding to the $A_i$'s) are gradually replaced by uncommon edges in $S_2$ (corresponding to the $B_i$'s). 

Definition \ref{Def:PathSpace} differs slightly from that of \cite[Definition 3.3]{owen1} in that it allows $T_1$ and $T_2$ to have common edges. This change is congruent with \cite[Proposition 4.1]{BILLERA2001}, which ensures the geodesic from $T_1$ to $T_2$ will be completely contained in a path space. The properties of the support of the path space that contains the geodesic were first explored in \cite[Section 2.3]{OwenMegan2011} and the \textit{Geodesic Treepath Problem} (GTP) algorithm  \citep[Section 4]{OwenMegan2011} finds this proper support in polynomial-time. 

Let $|p|_{T}$ be the length of the edge $p$ in tree $T$ and $||P||_{T} = \sqrt{\sum_{p \in P} |p|_T^{2}}$ be the $L^2$-norm of the lengths of all edges in the set of splits $P$ in $T$. The length of the geodesic will depend on the lengths of the uncommon edges through the $L^2$-norm of each $A_i$ and $B_i$ in the support; and on the difference in length of the common edges, including the external edges $H = \mathcal{H}(T_1) = \mathcal{H}(T_2)$. The length of the geodesic can be written in these terms by
\begin{equation}
\label{eq:FirstDistance}
        d(T_1, T_2) = \left|\left| \left(||A_1||_{T_1} + ||B_1||_{T_2}, \hdots, ||A_k||_{T_1} + ||B_k||_{T_2},  \left(|s|_{T_1} - |s|_{T_2}\right)_{s \in K}\right) \right|\right|,
\end{equation}
where 
$(a_i)_{i\in I}$ denotes the vector containing objects in $I$ as entries, and $K = C \cup H$  \citep[Theorem 1.2]{MillerOwenProvan} (see also \cite[Section 4]{OwenMegan2011}). 

\subsection{Extension spaces}
\label{sec:ExtensionSpace}

Extension spaces are a way to represent trees with a smaller leaf set $\mathcal{L} \subset \mathcal{N}$ in the BHV space $\mathcal{T}^{\mathcal{N}}$, thereby providing a framework for analyzing trees with non-identical leaf sets as subsets of $\mathcal{T}^{\mathcal{N}}$. 

A tree $T$ with leaf set $\mathcal{L}$ gives rise to a distance between any two members of $\mathcal{L}$, defined as the shortest path between them if traversing the graph $T$. We therefore consider the discrete metric space on $\mathcal{L}$ with metric $d_{T}(\ell_1, \ell_2)$ defined to be the distance between $\ell_1, \ell_2 \in \mathcal{L}$ in $T$ (Fig.~3(a)). 

\begin{figure}
    \label{fig:MetricTrees}
    \centering
    \subfigure[]{\includegraphics[height=4cm]{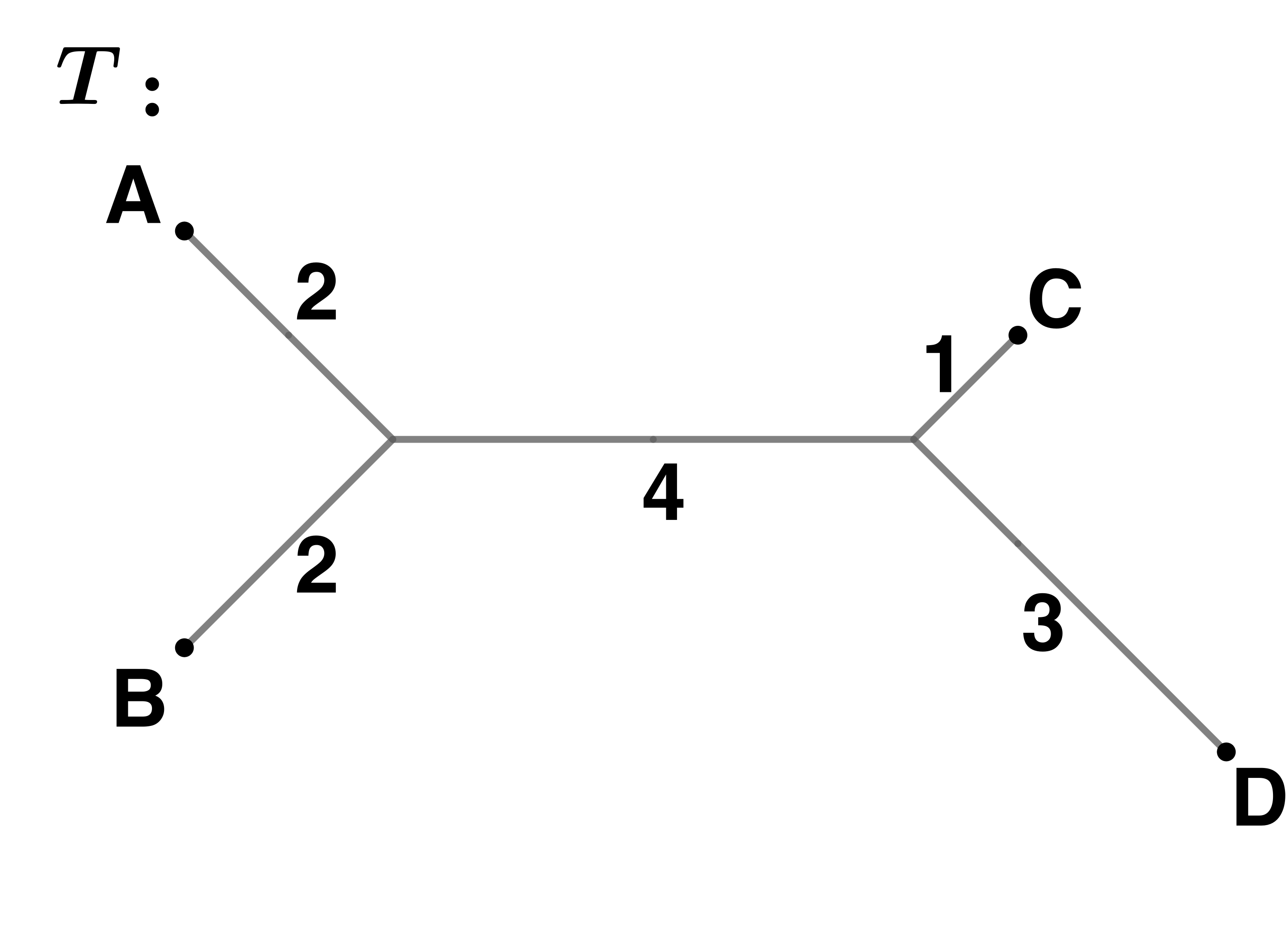}}
    \subfigure[]{\includegraphics[height=4cm]{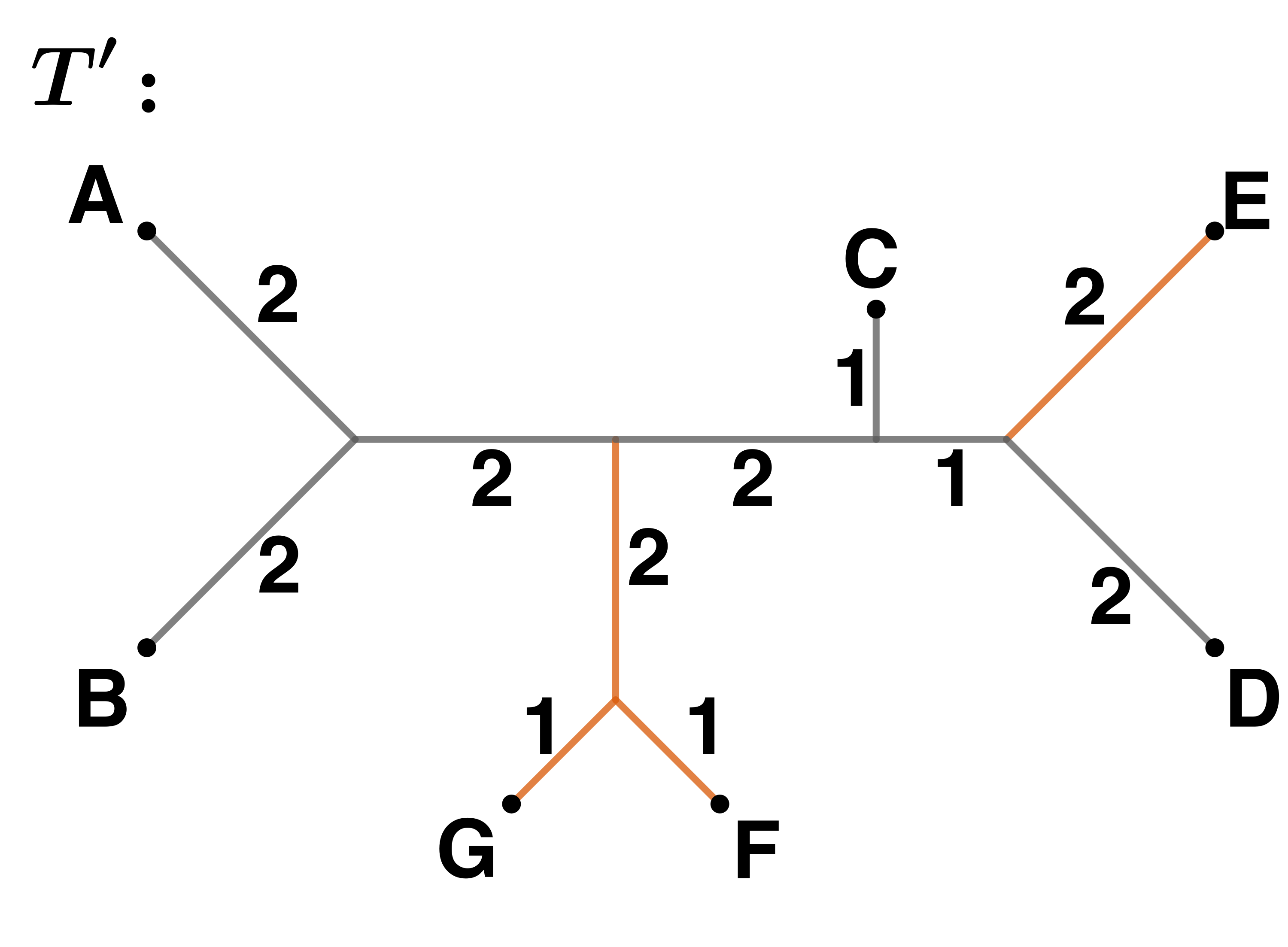}}
    \caption{For the complete leaf set $\mathcal{N} = \{A, B, C, D, E, F, G\}$ and subset $\mathcal{L} = \{A,B,C,D\}$. (a) A tree $T \in \mathcal{T}^{\mathcal{L}}$. For the discrete metric space over $\mathcal{L}$ given by $T$, $d_T(A,D) = 9$. (b) A tree $T'\in \mathcal{T}^{\mathcal{N}}$ in the extension space of $T$. The inconsequential edges for $E_{T}^{\mathcal{N}}$ are highlighted in orange.}
\end{figure}

\begin{definition}
    Given a tree $T$ with leaf set $\mathcal{L} \subseteq \mathcal{N}$, the \textbf{extension space} of $T$ in the BHV space $\mathcal{T}^\mathcal{N}$ is 
    \begin{equation}
        \label{def:extension}
        E_{T}^{\mathcal{N}} = \left\{T'\in \mathcal{T}^{\mathcal{N}} | d_{T'}(\ell_1, \ell_2) = d_{T}(\ell_1, \ell_2) \text{ for all } \ell_1, \ell_2 \in \mathcal{L}\right\}.
    \end{equation}
\end{definition}

\begin{definition}
    For $\mathcal{L} \subseteq \mathcal{N}$, the \textbf{tree dimensionality reduction} (TDR) map is the projection function $\Psi_{\mathcal{L}}: \mathcal{T}^{\mathcal{N}} \mapsto \mathcal{T}^{\mathcal{L}}$ that maps each tree $T'\in \mathcal{T}^{\mathcal{N}}$ to the \textbf{unique} tree $T \in \mathcal{T}^{\mathcal{L}}$ such that $(\mathcal{T}^{\mathcal{L}}, d_{T})$ is the metric subspace of $(\mathcal{T}^{\mathcal{N}}, d_{T'})$ restricted to $\mathcal{L}$. 
\end{definition}

Definition \ref{def:extension} was first introduced by  \citep[Definition 3.9]{GrindstaffGillian2019RoPL}, who proved it coincides with the pre-image of $T$ under the tree dimensionality reduction map $\Psi_{\mathcal{L}}$  \citep[Definition 4.1]{zairis2016genomic}. To find $\Psi_{\mathcal{L}}(T')$, we remove any edges that do not belong to a path connecting leaves in $\mathcal{L}$. We then traverse each induced degree-2 vertex  and merge its incident edges, replacing the two edges $(u_1, v)$ and $(v, u_2)$ (with lengths $w_1$ and $w_2$) with edge $(u_1, u_2)$ (with length $w_1 + w_2$). 

The projection $\Psi_{\mathcal{L}}$ can also be defined on split- and orthant-valued arguments. 

\begin{definition}
     \citep[Definition 2.4]{GrindstaffGillian2019RoPL} Given the split $s = \mathcal{G} \mid \mathcal{N}\setminus\mathcal{G}$ on $\mathcal{N}$, we define $\Psi_{\mathcal{L}}(s)$ to be the split on $\mathcal{L}$ that separates all leaves of $\mathcal{L}$ in $\mathcal{G}$ from those not in $\mathcal{G}$; that is, $ \Psi_{\mathcal{L}}(s) = \mathcal{L}\cap\mathcal{G} \mid \mathcal{L}\setminus \mathcal{G}$. 
    In a similar way, given a set of internal splits $S$ defining a topology orthant $\mathcal{O}(S)$, the projection onto $\mathcal{T}^{\mathcal{L}}$ is given by
\begin{equation*}
    \Psi_{\mathcal{L}}\left(\mathcal{O}(S)\right) = \mathcal{O}\left(\left\{\Psi_{\mathcal{L}}(s) | s \in S \text{ and }  \Psi_{\mathcal{L}}(s) \text{ is an internal split}\right\}\right).
\end{equation*}

\end{definition}

\begin{remark}
    It might be the case that either $\mathcal{G}\cap\mathcal{L} = \emptyset$ or $\mathcal{L} \setminus \mathcal{G} = \emptyset$, which means the projection $\Psi_{\mathcal{L}}(s)$ for $s = \mathcal{G} \mid \mathcal{N}\setminus \mathcal{G}$ is not a valid edge. This happens when this split is not involved in the shortest path between any pair of leaves contained in $\mathcal{L}$, which means this edge should be removed from the tree when applying $\Psi_{\mathcal{L}}$. In this case we say $\Psi_{\mathcal{L}}(s) = \emptyset$.
\end{remark}


\begin{definition}   \citep[Section 1.3]{ren2017combinatorial}
    For $\mathcal{L} \subseteq \mathcal{N}$ and a binary tree $T \in \mathcal{T}^{\mathcal{L}}$, the \textbf{connection cluster} $C_{T}^{\mathcal{N}}$ of $T$ in $\mathcal{T}^{\mathcal{N}}$ is the set of maximum-dimensional orthants in $\Psi_{\mathcal{L}}^{-1}(\mathcal{O}(T))$. 
\end{definition}

\begin{definition}
     \citep[Definition 3.9]{GrindstaffGillian2019RoPL} 
    Given a binary tree $T \in \mathcal{T}^{\mathcal{L}}$, with $\mathcal{L} \subseteq \mathcal{N}$, and an orthant $O \in C_{T}^{\mathcal{N}}$, let the \textbf{orthant-specific extension space}, $E_{T}^{O}$, be all trees in $O$ that are part of the extension space; that is, $E^{O}_{T} = E_{T}^{\mathcal{N}} \cap O$.  
\end{definition}

The previous two definitions are key in describing extension spaces, as the extension space can be described through the finite union $E_{T}^{\mathcal{N}} = \bigcup_{O \in C_T^{\mathcal{N}}} E_{T}^{O}$. Moreover, each $E_{T}^{O}$ is characterized  through a set of linear equations. Namely, given a binary tree $T \in \mathcal{T}^{\mathcal{L}}$ with $l$ leaves and an orthant $O \in C_{T}^{\mathcal{N}}$, consider the list of $2l - 3$ splits $q_1, \hdots, q_{2l-3}$ in $\mathcal{S}(T)$ and all $2n-3$ splits $p_1, \hdots, p_{2n-3}$ in  $\mathcal{S}(O)$. Note that for every split $p_j$, either $\Psi_{\mathcal{L}}(p_j) = q_i$ for some unique value $i \in  \{1, \hdots, 2l-3\}$, or $\Psi_{\mathcal{L}}(p_j) = \emptyset$. The TDR map acts linearly on the lengths of the splits because the removal of a leaf bisecting a branch results in a branch length that is the sum of the lengths of the adjacent branches. Formally, 
\begin{equation*}
    |q_i|_{\Psi_{\mathcal{L}}(T')} = \sum_{p_j: \Psi_{\mathcal{L}}(p_j) = q_i} |p_j|_{T'},
\end{equation*}
for all $T' \in O$. Thus, the TDR map for the orthant $O$ can be described by a linear map given by a $(2l-3) \times (2n-3)$-dimensional matrix, 
\begin{equation*}
    M_{\mathcal{L}}^{O}[i,j] = \begin{cases}
        1 & \text{ if } \Psi_{\mathcal{L}}(p_j) = q_i,\\
        0 & \text{ otherwise}.
    \end{cases}
\end{equation*}
Thus, $E^{O}_T$ can be described by the system of equations
\begin{equation}
\label{eq:EquationsOrthant}
    \begin{split}
        M_{\mathcal{L}}^{O} \mathbf{x} = \mathbf{v}_{T}, \\
        \mathbf{x} \geq 0,
    \end{split}
\end{equation}
where $\mathbf{x}$ is a $(2n-3)-$dimensional vector in orthant $O$ that gives the branch lengths of trees in $E_T^O$ and $\mathbf{v}_T$ is the vector representation of $T$ in $\mathcal{O}(T)$; that is, it is the $(2l-3)-$dimensional vector with the edge lengths of $T$.  

\begin{definition}
    The $j$-th column of the matrix $M_{\mathcal{L}}^{O}$ corresponds to the edge $p_j$ of the trees in $O$, which by definition will be a zero column if there is no edge $q_i$ to which $p_j$ maps into; i.e. $\Psi_{\mathcal{L}}(p_j) = \emptyset$. Thus, these edges as \textbf{inconsequential} for this extension space, since their length is unimportant for $\mathbf{x}$ to be a solution to \eqref{eq:EquationsOrthant}. If an edge is not inconsequential, it is \textbf{consequential}.
\end{definition}

Each column in $M_{\mathcal{L}}^{O}$ corresponding to a consequential edge has exactly one entry equal to 1 and the rest equal to 0, implying the rows are linearly independent. Moreover, every edge $q_i$ in $T$ has a non-empty pre-image under $\Psi_{\mathcal{L}}$, which means the matrix has no zero rows. Thus, $M_{\mathcal{L}}^{O}$ is of rank $2l - 3$. 

\begin{example} 
Consider again the trees in Fig.~3, and take $O$ to be the orthant of $T'$. The internal splits of trees in $O$ are: $s_1 = \{A,B\} \mid \{C,D,E,F,G\}$, $s_2 = \{C,D,E\} \mid \{A,B,F,G\}$, $s_3 = \{D,E\} \mid \{A,B,C,F,G\}$ and $s_4 = \{F,G\} \mid \{A,B,C,D,E\}$. Any tree in $E_T^O$ will solve \eqref{eq:EquationsOrthant} for the following values of $M_{\mathcal{L}}^{O}$ and $\mathbf{v}_T$. 
We have highlighted all zero columns in the projection matrix, and note that these coincide with the inconsequential edges shown in Fig.~3(b).
   \begin{align*}
      \mathbf{x}^{\top}  &
= \left(\begin{array}{ccccccccccc}
    e_A & e_B & e_C & e_D & e_E & e_F & e_G & s_1 & s_2 & s_3 & s_4
  \end{array}\right), \\
M_{\mathcal{L}}^{O} &
= \left(\begin{array}{cccc>{\columncolor{lightgray!20}}c>{\columncolor{lightgray!20}}c>{\columncolor{lightgray!20}}cccc>{\columncolor{lightgray!20}}c}
    1 & 0 & 0 & 0 & 0 & 0 & 0 & 0 & 0 & 0 & 0  \\
    0 & 1 & 0 & 0 & 0 & 0 & 0 & 0 & 0 & 0 & 0 \\
    0 & 0 & 1 & 0 & 0 & 0 & 0 & 0 & 0 & 0 & 0 \\
    0 & 0 & 0 & 1 & 0 & 0 & 0 & 0 & 0 & 1 & 0 \\
    0 & 0 & 0 & 0 & 0 & 0 & 0 & 1 & 1 & 0 & 0 
  \end{array}\right),
       \hspace{0.25cm}\mathbf{v}_{T} = \left(\begin{array}{ccccc}
  2 & 2 & 1 & 3 & 4
\end{array}\right), 
    \end{align*}
\end{example}




So far we have described extension spaces through the union of a finite number of subsets, each contained in a single maximum-dimensional orthant. These maximum-dimensional orthants are in themselves Euclidean spaces (Remark \ref{Remark:euclideanOrthants}) and the extension spaces restricted to maximum-dimensional orthants are affine subspaces. These observations lay the groundwork for the development of an algorithm to compute distances between orthant-specific extension spaces based on gradient of descent methods (Section \ref{sec:ESOdistance}). We provide an overview of gradient of descent methods in the following section.

\section{Reduced gradient methods} 
\label{sec:OptimizationOverview}

Gradient descent methods are a class of iterative algorithms that find stationary points of a function. At each iteration of a gradient descent algorithm, the local changes of the function around the current point are evaluated, and a ``step'' in best local direction is taken. 
In their simplest form, these methods are defined on a continuous differentiable function $f: D \mapsto \mathbb{R}$ on a domain $D \subset \mathbb{R}^{n}$. They rely on the idea that the gradient $\nabla f$ describes local behavior of $f$ and at each $x \in D$, and the vector $-\nabla f(x)$ provides a direction towards which the value of $f$ is locally decreasing the most. 
By starting at $x_0$ and repeatedly taking small steps in a decreasing direction, the algorithm will approach local minimum. While these algorithms are not guaranteed to find the global minimum of the function (or even to converge), certain conditions on $f$ and $D$ ensure good behavior. In this overview, we will focus on convex functions over convex and compact domains. In general, for convex differentiable functions we can guarantee the existence of a global minimum, and that this minimum will be reachable by gradient descent methods  \citep[Theorem 3.1, Theorem 5.4]{NLoptimization}. 

\label{sec:ReduxGDM}

The reduced gradient method is applicable when there are constraints on the domain of $f$. That is, we wish to minimize $f: \mathbb{R}^{n} \mapsto \mathbb{R}$ subject to two constraints: 
\begin{itemize}
    \item \textbf{Linear constraints: } $A \mathbf{x} = \mathbf{b}$ for some matrix $A \in \mathbb{R}^{m\times n}$ of rank $m$ and a fixed $\mathbf{b} \in \mathbb{R}^{m}$, and
    \item \textbf{Inequality constraints: } $\mathbf{x} \geq 0$. 
\end{itemize}
The set of values $\mathcal{X} \subset \mathbb{R}^{n}$ holding these two constraints is called the \textbf{feasible set}. Within this feasible set, the algorithm searches by moving along different facets, where a \textbf{facet} is a subset of $\mathcal{X}$ with certain entries fixed at zero. Specifically, a facet consists of all vectors $x \in \mathcal{X}$ such that $x_{i} = 0$ for each $i$ in some pre-specified subset of indices. 

At each step of the reduced gradient method, we determine the direction of change by considering the gradient of $f$, but we limit the possible directions of change based on the constraints. Since $A$ is a matrix of rank $m$, we know that the system of equations $A \mathbf{x} = \mathbf{b}$ has $n-m$ degrees of freedom, that is, we can consider $m$ entries of $\mathbf{x}$ as depending on the other $n-m$ free variables directly. 
If we consider the gradient of $f$ as a function of only the free variables, and determine the change on the dependent variables afterwards, we can find a new point that remains in the feasible set. 

In what follows, we use the notation from \cite[Section 6.1.2]{NLoptimization}. At each step of the reduced gradient method, we classify the variables in $\mathbf{x}$ into three groups:
\begin{itemize}
    \item Null variables $\mathbf{N}$: Variables fixed at zero. $x_i = 0$ when $i \in \mathbf{N}$.
    \item Free variables $\mathbf{F}$: Independent variables. $x_i$ can take any non-negative value when $i\in \mathbf{F}$.
    \item Dependent variables $\mathbf{D}$: Knowing the values of all other variables allows us to find $x_i$, $i \in \mathbf{D}$, to satisfy the linear constraints.  
\end{itemize}
We use the notation $\mathbf{x}_{\mathbf{D}}$ to refer to the vector containing only the entries of $\mathbf{x}$ corresponding to the dependent variables, $\mathbf{x}_{\mathbf{F}}$ the vector containing all the free variables, $\mathbf{x}_{\mathbf{N}}$ the vector containing all the null variables, and write 
    $f(\mathbf{x}) = f(\mathbf{x}_{\mathbf{D}}, \mathbf{x}_{\mathbf{F}}, \mathbf{x}_{\mathbf{N}})$
to make explicit the dependence of $f$ on the variables as the index sets vary. We also use $M_{\mathbf{I}}$ to refer to the matrix obtained by subseting the columns of $M$ indexed by $\mathbf{I}$.

The reduced gradient method begins by initializing the index sets, selecting $m$ variables to be the dependent variables in $\mathbf{D}^{0} \subset \{1,\hdots, n\}$ so that $A_{\mathbf{D}^{0}}$ is a square matrix of rank $m$, then taking $\mathbf{F}^{0} = \{1,\hdots, n\} \setminus \mathbf{D}^{0}$ and $\mathbf{N}^{0} = \emptyset$. An initial point $\mathbf{x}^{0}$ such that $A\mathbf{x}^{0} = \mathbf{b}$ and $\mathbf{x}^{0} > 0$ is also initialized. In the $k$-th iteration:

\begin{enumerate}
    \item Define the function $\varphi$ that takes the values of $f$ but depends only on the free variables. Null variables are fixed at zero and the dependent variables are those that guarantee the linear constraints, $\mathbf{x}_{\mathbf{D}^{k}} = A_{\mathbf{D}^{k}}^{-1} \left[\mathbf{b} - A_{\mathbf{F}^k} \mathbf{x}_{\mathbf{F}^{k}}\right]$. Therefore,
    $\varphi(\mathbf{x}_{\mathbf{F}^{k}}) = f\left(A_{\mathbf{D}^{k}}^{-1} \left[\mathbf{b} - A_{\mathbf{F}^k} \mathbf{x}_{\mathbf{F}^{k}}\right], \mathbf{x}_{\mathbf{F}^{k}}, 0\right).$
    \item Compute the gradient of this function by applying the chain rule:
    \begin{equation}
        \nabla \varphi(\mathbf{x}_{\mathbf{F}^{k}}) = \nabla_{\mathbf{F}^{k}} f(\mathbf{x}) - A_{\mathbf{F}^{k}}^{\top} \left[A_{\mathbf{D}^{k}}^{-1}\right]^{\top} \nabla_{\mathbf{D}^{k}} f(\mathbf{x}).
    \end{equation}
    \item Find an optimal direction of change $\mathbf{d}^{k} = (\mathbf{d}^{k}_{\mathbf{D}}, \mathbf{d}^{k}_{\mathbf{F}}, \mathbf{d}^{k}_{\mathbf{N}})$ in the feasible set by: 
    \begin{enumerate}[i.]
        \item Employing an unconstrained optimization method to find $\mathbf{d}^{k}_{\mathbf{F}}$ based on $\nabla \varphi(\mathbf{x}_{\mathbf{F}^{k}})$ (e.g., via the conjugate gradient method  \citep[Section 5.5]{NLoptimization}).
        \item Setting $\mathbf{d}^{k}_{\mathbf{D}} = -A_{\mathbf{D}}^{-1} A_{\mathbf{F}} \mathbf{d}^{k}_{\mathbf{F}}$ to ensure linear constraints. 
        \item Keeping the null variables fixed at zero, i.e., $\mathbf{d}^{k}_{\mathbf{N}} = 0$. 
    \end{enumerate}
    \item Find the value $\tau^{*}$ that minimizes
        $f(\mathbf{x}^{k} + \tau \mathbf{d}^{k}),$
    maintaining the inequality constraint, $\mathbf{x}^{k} + \tau^{*} \mathbf{d}^{k} \geq 0$ (e.g., using a line search method  \citep[Section 5.2]{NLoptimization}).
    \item Select the next point by taking $\mathbf{x}^{k+1} = \mathbf{x}^{k} + \tau^{*} \mathbf{d}^{k}$. 
    \item Verify if variables need to be reclassified. If $\langle \nabla f(\mathbf{x}^{k+1}), d^{k}\rangle < 0$, then $\tau^{*}$ 
    was selected for being the maximum value of $\tau$ holding the inequality constraint. Some non-null variable in $\mathbf{x}^{k+1}$ has value zero and needs to be reclassified: 
    \begin{enumerate}[i.]
        \item If $x^{k+1}_{i} = 0$ for some $i \in \mathbf{F}^{k}$, then reclassify $x_i$ as a null variable (if there is multiple options, select one), setting $\mathbf{F}^{k+1} = \mathbf{F}^{k} \setminus \{i\}$ and $\mathbf{N}^{k+1} = \mathbf{N}^{k} \cup \{i\}$.
        \item Otherwise, $x^{k+1}_{i} = 0$ for some $i \in \mathbf{D}^{k}$ (if there is multiple, select one). In this case we reclassify $x^{k+1}_{i}$ as a null variable, but a new dependent variable must be found to satisfy the linear constraint. We find $j \in \mathbf{F}^{k}$ such that setting $\mathbf{D}^{k+1} = \{j\}\cup \mathbf{D}^{k} \setminus \{i\} $ makes $A_{\mathbf{D}^{k+1}}$ still of rank $m$  \citep[Lemma 6.2]{NLoptimization}. Then, $\mathbf{F}^{k+1} = \mathbf{F}^{k}\setminus \{j\}$ and $\mathbf{N}^{k+1} = \mathbf{N}^{k} \cup \{i\}$. 
    \end{enumerate}
    \item Repeat 1-6 until either $\nabla \varphi(\mathbf{x}^{k}_{\mathbf{F}^{k}}) = 0$ or $\mathbf{F}^{k} = \emptyset$. 
    \item Check if the current point $\mathbf{x}^{k}$ is the global solution by computing 
                $\bar{\mu}^{k} = \nabla_{\mathbf{N}^{k}} f\left(\mathbf{x}^{k}\right) - A_{ \mathbf{N}^{k}}^{\top} \left\{A_{ \mathbf{D}^{k}}^{\top}\right\}^{-1}  \nabla_{\mathbf{D}^{k}} f\left(\mathbf{x}^{k}\right)$
            \cite[Lemma 6.3]{NLoptimization}. 
        If $\bar{\mu}^{k}_j \geq 0$ for every entry $j$, end the algorithm and return $\mathbf{x}^{k}$. 
        If $\bar{\mu}^{k}_{j} < 0$ for any entry $j$, increase the search region by taking $\mathbf{N}_0 \subset \mathbf{N}^{k}$ so that $\bar{\mu}^{k}_i < 0$ for all $i \in \mathbf{N}_0$ and reclassifying $\mathbf{F}^{k+1} = \mathbf{F}^{k} \cup \mathbf{N}_0$ and $\mathbf{N}^{k+1} = \mathbf{N}^{k} \setminus \mathbf{N}_0$, and return to 1-6. 
\end{enumerate}

If $f$ is convex and the feasible set $ \mathcal{X} \subset \mathbb{R}^{n}$ is compact, the number of stationary points that will be reached in Step 7 is finite, and therefore the global minimum will eventually be attained  \citep[Theorem 6.4]{NLoptimization}. 

\section{Distances between extension spaces}
\label{sec:ESOdistance}

Given two trees $T_1$ and $T_2$ with leaf sets $\mathcal{L}_1, \mathcal{L}_2 \subseteq \mathcal{N}$, we study computing the distance between their extension spaces $E^{\mathcal{N}}_{T_1}$ and $E^{\mathcal{N}}_{T_2}$ (see \eqref{eq:distanceDefinition}). 
If a pair of trees $(t^{*}_1,t^{*}_2) \in  E_{T_1}^{\mathcal{N}
} \times E_{T_2}^{\mathcal{N}
}$ is such that $d(t^{*}_1,t^{*}_2) =  d\left(E_{T_1}^{\mathcal{N}
}, E_{T_2}^{\mathcal{N}}\right)$, we call this an \textbf{optimal pair}. 
As we will see, an optimal pair is guaranteed to exist by the convexity properties of orthant-specific extension spaces and their finite products,  however, optimal pairs may not be unique. 

\subsection{Search region}
\label{subSect:OESproperties}

As discussed in Section \ref{sec:ExtensionSpace}, the extension space $E_{T}^{\mathcal{N}}$ for a tree $T \in \mathcal{T}^{\mathcal{L}}$ can be described as the finite union of subsets restricted to maximum-dimensional orthants, $\bigcup_{O \in C_{T}^{\mathcal{N}}} E^{O}_{T}$. Our algorithm to find the distance between extension spaces will find the minimal distance within a given orthant pair $(O_{1}, O_{2}) \in C_{T_1}^{\mathcal{N}} \times C_{T_2}^{\mathcal{N}}$. In this way, we construct a finite number of candidates for the optimal pair. We begin by demonstrating that we can exclude poor candidates for optimal pairs, and in this way, obtain convex and compact search regions.

\begin{definition}
    Given trees $T_1 \in \mathcal{T}^{\mathcal{L}_1}$ and $T_2 \in \mathcal{T}^{\mathcal{L}_2}$, where $\mathcal{L}_1, \mathcal{L}_2 \subseteq \mathcal{N}$, and an orthant pair $(O_1, O_2) \in  C_{T_1}^{\mathcal{N}} \times  C_{T_2}^{\mathcal{N}}$, we define the \textbf{orthant-specific mutually restricted extension space}, $\lceil{E^{O_1}_{T_1} \times E^{O_2}_{T_2}}\rceil$, as the subset of all tree pairs $(T'_1, T'_2) \in E^{\mathcal{O}_1}_{T_1} \times E^{\mathcal{O}_2}_{T_2}$ for which common edges that are inconsequential in either $T_1$ or $T_2$ are the same length in both trees, common edges that are inconsequential in both trees are length zero, and inconsequential uncommon edges are of length zero. That is, 
    $(T'_1, T'_2) \in E^{O_1}_{T_1} \times E^{O_2}_{T_2}$ is in $\lceil{E^{O_1}_{T_1} \times E^{O_2}_{T_2}}\rceil$ when the  \textbf{conditions for mutual restriction} hold: 
    \begin{itemize}
        \item $|p|_{T'_1} = |p|_{T'_2}$ for all $p \in \mathcal{P}(O_1) \cap \mathcal{P}(O_2)$ with $\Psi_{\mathcal{L}_1}(p) = \emptyset$ or $\Psi_{\mathcal{L}_2}(p) = \emptyset$,
        \item $|p|_{T'_1} = |p|_{T'_2} = 0$ for all $p \in \mathcal{P}(O_1) \cap \mathcal{P}(O_2)$ with $\Psi_{\mathcal{L}_1}(p) = \emptyset$ and $\Psi_{\mathcal{L}_2}(p) = \emptyset$,
        \item $|p|_{T'_1} = 0$ for every $p \in \mathcal{P}(O_1) \setminus \mathcal{P}(O_2)$ with $\Psi_{\mathcal{L}_1}(p) = \emptyset$,
        \item $|p|_{T'_2} = 0$ for every $p \in \mathcal{P}(O_2) \setminus \mathcal{P}(O_1)$ with $\Psi_{\mathcal{L}_2}(p) = \emptyset$.
\end{itemize}
\end{definition}

This definition is motivated by the length of any inconsequential edge $p$ in $T'_1\in O_1$ not affecting whether $T'_1$ is a part of the 
extension of $T_1$. However, the length of $|p|_{T'_1}$ influences the distance from this tree to trees in the orthant-specific extension space $E_{T_2}^{O_2}$. When building a tree $T_1'$ as a candidate for the optimal pair, once the lengths of consequential edges are chosen to ensure $T'_1 \in E_{T_1}^{O_1}$, we can freely select the length of $p$ to minimize the distance. 

Lemmas \ref{lemma:ProjectingCloser} and \ref{Lemma:distanceInconsequentialBranches} formalize this idea, showing that for any pair $(T'_1, T'_2) \in E^{O_1}_{T_1} \times E^{O_2}_{T_2}$, we can construct a new tree pair $(T^*_1, T^*_2) \in \lceil{E^{O_1}_{T_1} \times E^{O_2}_{T_2}}\rceil$ such that $d(T^*_1, T^*_2) \leq d(T'_1, T'_2)$. Or first result concerns uncommon inconsequential edges. 


\begin{lemma}
    \label{lemma:ProjectingCloser}
    Consider two trees $T_1, T_2 \in \mathcal{T}^{\mathcal{N}}$ with at least one uncommon edge $p \in \mathcal{P}(T_1) \setminus \mathcal{P}(T_2)$ in $T_1$. Define $T^{\perp p}_1$ to be the projection of $T_1$ onto the face of $\mathcal{O}(T_1)$ defined by the length of $p$ being equal to zero; that is, $T^{\perp p}_1$ is such that $T^{\perp p}_1 \in \mathcal{O}(T_1)$ with $|p|_{T^{\perp p}_1} = 0$ and $|p'|_{T^{\perp p}_1} = |p'|_{T_1}$ for all $p' \neq p$. Then, 
    $d(T^{\perp p}_1, T_2) \leq d(T_1, T_2)$.
\end{lemma}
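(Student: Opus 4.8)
My plan is to compare the two distances through the closed-form geodesic length \eqref{eq:FirstDistance}. First I would record the structural facts about $p$ and $T_1^{\perp p}$. Since $T_1,T_2\in\mathcal{T}^{\mathcal{N}}$ share the leaf set $\mathcal{N}$, their external edges coincide, so an uncommon edge $p\in\mathcal{P}(T_1)\setminus\mathcal{P}(T_2)$ is necessarily an internal split with $p\in\mathcal{S}(T_1)\setminus\mathcal{S}(T_2)$; in particular $p$ is not common, $p\notin C:=\mathcal{S}(T_1)\cap\mathcal{S}(T_2)$, and $p\notin K:=C\cup\mathcal{H}(T_1)$. Contracting $p$ gives $\mathcal{S}(T_1^{\perp p})=\mathcal{S}(T_1)\setminus\{p\}$, leaves the length of every other edge unchanged, and (because $p\notin C$) leaves the set of common splits of $T_1^{\perp p}$ and $T_2$ equal to $C$ and the set $K$ unchanged.

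Next I would transfer the proper path space of the $T_1$--$T_2$ geodesic to one joining $T_1^{\perp p}$ to $T_2$. Writing it as $\mathcal{O}(T_1)=O_0\to\cdots\to O_k=\mathcal{O}(T_2)$, with $S_1=G_0\supset\cdots\supset G_k=C$, $C=F_0\subset\cdots\subset F_k=S_2$ and support $(\mathcal{A},\mathcal{B})$, the split $p$ lies in $G_0\setminus G_k$, hence in exactly one dropped set $A_{i_0}$ and in no added set $B_j$ (as $B_j\subseteq S_2$). Setting $G_i':=G_i\setminus\{p\}$ and $F_i':=F_i$, each $O_i':=\mathcal{O}(G_i'\cup F_i)$ is a valid topology orthant (a subset of a pairwise-compatible split set is pairwise compatible), with $O_0'=\mathcal{O}(T_1^{\perp p})$ and $O_k'=\mathcal{O}(T_2)$; this yields a path space $P'$ from $T_1^{\perp p}$ to $T_2$ whose support agrees with $(\mathcal{A},\mathcal{B})$ except that $A_{i_0}$ is replaced by $A_{i_0}\setminus\{p\}$ (and if this is empty the corresponding step merely adds edges, which I will treat as routine).

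The comparison is then entrywise in \eqref{eq:FirstDistance}. Because contracting $p$ does not change any edge length other than $|p|$, we have $\|A_i\|_{T_1^{\perp p}}=\|A_i\|_{T_1}$ for $i\ne i_0$, $\|A_{i_0}\setminus\{p\}\|_{T_1^{\perp p}}^2=\|A_{i_0}\|_{T_1}^2-|p|_{T_1}^2\le\|A_{i_0}\|_{T_1}^2$, and $|s|_{T_1^{\perp p}}=|s|_{T_1}$ for every $s\in K$. Hence the vector whose norm is the length $\ell(P')$ of the canonical path through $P'$ is coordinatewise dominated in absolute value by the vector whose norm is $d(T_1,T_2)$, so $\ell(P')\le d(T_1,T_2)$; since $\ell(P')$ is the length of an actual $T_1^{\perp p}$-to-$T_2$ path, the geodesic is no longer, and $d(T_1^{\perp p},T_2)\le\ell(P')\le d(T_1,T_2)$.

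The step I expect to require the most care is the claim that \eqref{eq:FirstDistance} evaluated on $P'$ bounds $d(T_1^{\perp p},T_2)$ from above: \eqref{eq:FirstDistance} is stated for the geodesic's \emph{proper} path space, and $P'$ need not be proper, so one must invoke that in the Owen--Provan construction \eqref{eq:FirstDistance} always gives the length of a genuine ``cone path'' inside the given path space (equivalently, that refining a path space never increases this value, with the geodesic length as the limit). An alternative that sidesteps this entirely: the map $\rho_p:\mathcal{T}^{\mathcal{N}}\to\mathcal{T}^{\mathcal{N}}$ that contracts the split $p$ and fixes every tree lacking it restricts on each orthant to a Euclidean coordinate projection onto one of its faces, so by Remark \ref{Remark:euclideanOrthants} and the fact that geodesics are finite concatenations of segments each contained in a single orthant, $\rho_p$ is $1$-Lipschitz; pushing the $T_1$--$T_2$ geodesic through $\rho_p$ produces a path from $\rho_p(T_1)=T_1^{\perp p}$ to $\rho_p(T_2)=T_2$ of length at most $d(T_1,T_2)$, giving the claim.
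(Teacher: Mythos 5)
Your first route breaks at exactly the step you flagged, and the problem there is not just a matter of care: formula \eqref{eq:FirstDistance} evaluated on an arbitrary valid path space is \emph{not} in general the length of any path in tree space. It equals a genuine path length only when the support satisfies the Owen--Provan ratio condition (P1), $\|A_1\|/\|B_1\|\le\cdots\le\|A_k\|/\|B_k\|$; deleting $p$ from $A_{i_0}$ lowers the $i_0$-th ratio and can break this ordering, and for a support violating (P1) the formula value can be strictly smaller than the length of \emph{every} path in that path space. For instance, with single uncommon splits $\|A_1\|=2,\ \|B_1\|=1,\ \|A_2\|=1,\ \|B_2\|=2$ and the two-step path space valid, the formula gives $3\sqrt{2}$, while the shortest path inside that path space is the cone path of length $2\sqrt{5}>3\sqrt{2}$. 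For the same reason your parenthetical justification is false: refining a support always weakly \emph{decreases} the formula value (Minkowski), so over-refining past the proper support can dip below the geodesic length rather than converge to it. Hence the chain $d(T_1^{\perp p},T_2)\le\ell(P')\le d(T_1,T_2)$ has an unjustified first link; repairing it by merging blocks until (P1) is restored increases the formula value, so the comparison with $d(T_1,T_2)$ would then have to be re-established by a separate argument.

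Your fallback argument, by contrast, is sound and is essentially the paper's own proof. The paper takes the breakpoints $t_0=T_1,t_1,\ldots,t_r$ of the geodesic up to the transition where $p$ is dropped, projects each onto the face $|p|=0$ of its orthant, notes that each within-orthant Euclidean segment can only shorten, and concatenates with the untouched remainder of the geodesic before applying the triangle inequality; that is precisely your map $\rho_p$ (identity on orthants not containing $p$, coordinate projection onto the face $|p|=0$ otherwise, hence $1$-Lipschitz on each orthant) applied piecewise to the geodesic, using that $p$ cannot reappear in any orthant of the path space after it is dropped. Had you led with that argument, the proof would be complete; the only details worth stating are the well-definedness and continuity of $\rho_p$ across shared faces, which are immediate.
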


\begin{proof}
    Consider the support $(\mathcal{A}, \mathcal{B})$ of the path space of the geodesic from $T_1$ to $T_2$.  For $i = 1, \hdots, k$, we denote by $t_i \in \mathcal{T}^{\mathcal{N}}$ the first tree on the geodesic from $T_1$ to $T_2$ that belongs to the orthant
    $O_i = \mathcal{O}(C \cup B_1 \cup \hdots \cup B_{i} \cup A_{i+1} \cup \hdots  A_{k})$
    in the path space of the geodesic. This means $t_i$ is a tree in the face shared by $O_{i-1}$ and $O_i$. Since $p$ is an edge of $T_1$ and not of $T_2$, then there is a value $r \in {1, \hdots, k}$ such that $p \in A_{r}$. We will construct a shorter path from $T^{\perp p}_1$ to $t_r$ than the geodesic from $T_1$ to $t_r$ by projecting every $t_i$ for $i \leq r$ towards the face of the respective orthant where the length of the edge $p$ is zero. 
    
    For every $i < r$ we have $p \in S(O_i)$. Similarly as before, for each $i \leq r$, define $t^{\perp p}_i$ to be the tree in $O_i$ such that $|p|_{t^{\perp p}_i} = 0$ and $|p'|_{t_i^{\perp p}} = |p'|_{t_i}$ for every other edge $p' \in \mathcal{P}(O_{i})$ such that $p'\neq p$. In particular, since $t_r$ is a tree in the face shared by $O_{r-1}$ and $O_r$, and $p\in A_r$ implies $p \notin \mathcal{P}(O_r)$, then $|p|_{t_r} = 0$ and $t_r^{\perp p} = t_r$. 

    We denote $t_0 = T_1$. For every $i=1, \hdots, r$, both $t_{i-1}$ and $t_i$ are in the orthant $O_{i-1}$, so 
    $d(t_{i-1}, t_i) = \sqrt{ \sum_{p' \in \mathcal{P}(O_{i-1})} \left(|p'|_{t_{i-1}} - |p'|_{t_i}\right)^2}$. Similarly, 
    $d(t^{\perp p}_{i-1}, t^{\perp p}_i) = \sqrt{\sum_{p' \in \mathcal{P}(O_{i-1})}  \left(|p'|_{t^{\perp p}_{i-1}} - |p'|_{t^{\perp p}_i}\right)^2}$. In the latter expression the difference of lengths for edge $p$ will be zero and all other differences will be as in the former expression, therefore $d(t^{\perp p}_{i-1}, t^{\perp p}_i) \leq d(t_{i-1}, t_i)$. Thus  $d(T_1, T_2) \geq d(T^{\perp p}_{0}, t^{\perp p}_1) + d(t^{\perp p}_{1}, t^{\perp p}_2) + \hdots + d(t^{\perp p}_{r-1}, t_r) + d(t_r, T_2) \geq d(T^{\perp p}_1, T_2)$
    %
\end{proof}

We now give a stronger result which demonstrates that we can find minimal distances between extension spaces by searching orthant-specific mutually restricted extension spaces. 


\begin{lemma}
    \label{Lemma:distanceInconsequentialBranches}
    Given two trees $T_1$ and $T_2$ with leaf sets $\mathcal{L}_1, \mathcal{L}_2 \subseteq \mathcal{N}$ and an orthant pair $(O_1, O_2) \in  C_{T_1}^{\mathcal{N}} \times  C_{T_2}^{\mathcal{N}}$, the distance between the orthant-specific extension spaces equals the distance between their orthant-specific mutually restricted extension space, that is, 
    \begin{equation*}
        \inf \left\{d(T'_1, T'_2) \left| (T'_1, T'_2) \in E^{O_1}_{T_1} \times E^{O_2}_{T_2}\right. \right\} = \inf \left\{d(T'_1, T'_2) \left|(T'_1, T'_2) \in  \lceil{E^{O_1}_{T_1} \times E^{O_2}_{T_2}}\rceil \right. \right\}.
    \end{equation*}
\end{lemma}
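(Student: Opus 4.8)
The plan is to prove the two inequalities separately. Since $\lceil{E_{T_1}^{O_1} \times E_{T_2}^{O_2}}\rceil \subseteq E_{T_1}^{O_1} \times E_{T_2}^{O_2}$, the infimum on the left is at most the infimum on the right, so only the reverse inequality needs work. For that it suffices to show that every pair $(T'_1, T'_2) \in E_{T_1}^{O_1} \times E_{T_2}^{O_2}$ admits a pair $(T^*_1, T^*_2) \in \lceil{E_{T_1}^{O_1} \times E_{T_2}^{O_2}}\rceil$ with $d(T^*_1,T^*_2) \le d(T'_1,T'_2)$; taking the infimum over $(T'_1,T'_2)$ then forces the right-hand infimum to be at most the left-hand one. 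I would build $(T^*_1, T^*_2)$ from $(T'_1, T'_2)$ in two stages, first enforcing the last two conditions for mutual restriction and then the first two. Because the first stage alters only edges in $\mathcal{P}(O_1)\setminus\mathcal{P}(O_2)$ and $\mathcal{P}(O_2)\setminus\mathcal{P}(O_1)$ while the second alters only edges in $\mathcal{P}(O_1)\cap\mathcal{P}(O_2)$, the two stages act on disjoint edge sets and cannot undo one another, so all four conditions hold at the end.

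\textit{Stage 1 (uncommon inconsequential edges).} Fix $p \in \mathcal{P}(O_1)\setminus\mathcal{P}(O_2)$ with $\Psi_{\mathcal{L}_1}(p)=\emptyset$. If $|p|_{T'_1}=0$ the third condition already holds for $p$; otherwise $p$ is an edge of $T'_1$ which, not being a split of $O_2$, is not an edge of $T'_2$, so $p\in\mathcal{P}(T'_1)\setminus\mathcal{P}(T'_2)$ and Lemma \ref{lemma:ProjectingCloser} shows that projecting $T'_1$ onto the face $|p|=0$ does not increase the distance to $T'_2$. Since $p$ is inconsequential for $E_{T_1}^{O_1}$, its column of $M_{\mathcal{L}_1}^{O_1}$ is zero, so the projected tree still solves \eqref{eq:EquationsOrthant} and remains in $E_{T_1}^{O_1}$. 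Iterating over all such $p$, and symmetrically over $p\in\mathcal{P}(O_2)\setminus\mathcal{P}(O_1)$ with $\Psi_{\mathcal{L}_2}(p)=\emptyset$ for $T'_2$, produces a pair satisfying the third and fourth conditions whose distance is no larger than the original, and whose edge lengths in $\mathcal{P}(O_1)\cap\mathcal{P}(O_2)$ are untouched.

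\textit{Stage 2 (common edges inconsequential in at least one tree).} For $p\in\mathcal{P}(O_1)\cap\mathcal{P}(O_2)$, the key observation is that the support $(\mathcal{A},\mathcal{B})$ of the path space carrying the geodesic between two trees is determined by their uncommon edges alone; hence varying $|p|$ over $(0,\infty)$ in either endpoint leaves the support fixed, and by \eqref{eq:FirstDistance} the squared distance equals $K_0+\bigl(|p|_{T'_1}-|p|_{T'_2}\bigr)^2$ for a constant $K_0$ not depending on $|p|$, an identity that extends by continuity of $d$ to the cases in which $|p|$ vanishes in one or both endpoints. Consequently: if $\Psi_{\mathcal{L}_1}(p)=\emptyset$ we reset $|p|_{T^*_1}:=|p|_{T^*_2}$, which is permissible because $p$ is inconsequential for $E_{T_1}^{O_1}$ and makes the term across $p$ vanish without increasing $d$; symmetrically if $\Psi_{\mathcal{L}_2}(p)=\emptyset$; and if $\Psi_{\mathcal{L}_1}(p)=\Psi_{\mathcal{L}_2}(p)=\emptyset$ we set $|p|_{T^*_1}=|p|_{T^*_2}=0$, which replaces the (nonnegative) term across $p$ by $0$ and so again does not increase $d$. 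Doing this for every such $p$ gives a pair satisfying the first two conditions as well, hence a pair $(T^*_1,T^*_2)\in\lceil{E_{T_1}^{O_1}\times E_{T_2}^{O_2}}\rceil$ with $d(T^*_1,T^*_2)\le d(T'_1,T'_2)$, which is what was needed.

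The step I expect to be the main obstacle is the identity invoked in Stage 2: one must argue, from the structure of the Owen--Provan geodesic reviewed in Section \ref{sec:OwenProvan}, that common-edge lengths play no role in the min-cut computations that determine the path-space support, so that \eqref{eq:FirstDistance} genuinely takes the form $K_0+(\text{difference across }p)^2$ with $K_0$ constant, and one must handle with care the degenerate transitions in which a common edge shrinks to length zero---where it formally leaves the common edge set---by appealing to the continuity of the BHV metric.
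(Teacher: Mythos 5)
Your proof is correct and takes essentially the same route as the paper's: one inequality is immediate from the subset relation, and the reverse is obtained by adjusting common inconsequential edges using the fact that common-edge lengths do not affect the Owen--Provan support (so the squared distance splits off the $(|p|_{T'_1}-|p|_{T'_2})^2$ terms) and by zeroing uncommon inconsequential edges through repeated application of Lemma \ref{lemma:ProjectingCloser}; the paper simply performs these two stages in the opposite order, which is immaterial since they act on disjoint edge sets. Your explicit continuity argument for a common orthant edge whose length vanishes in one endpoint handles a degenerate case that the paper's proof passes over implicitly, and is a sound way to justify it.
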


\begin{proof}
     Since $\lceil{E^{O_1}_{T_1} \times E^{O_2}_{T_2}}\rceil \subseteq E^{O_1}_{T_1} \times E^{O_2}_{T_2}$, we need only show that 
    for any pair $(T'_1, T'_2) \in E^{O_1}_{T_1} \times E^{O_2}_{T_2}$ there is a pair $(T^{*}_1, T^{*}_2) \in \lceil{E^{O_1}_{T_1} \times E^{O_2}_{T_2}}\rceil$ such that $d(T^{*}_1, T^{*}_2) \leq d (T'_1, T'_2)$. We will construct such a pair by first setting all inconsequential common edges to a proper length and all uncommon inconsequential edges to length zero. 

    We begin by defining $(T^0_1, T^0_2)$ such that $\mathcal{O}(T^0_1) \subseteq O_1$ and $\mathcal{O}(T^0_2) \subseteq O_2$, $|s|_{T^{0}_1} = |s|_{T'_1}$ for all $s\in \mathcal{P}(O_1)\setminus \mathcal{P}(O_2)$, $|s|_{T^{0}_2} = |s|_{T'_2}$ for all $s\in \mathcal{P}(O_2)\setminus \mathcal{P}(O_1)$, and the conditions for mutual restriction hold for common edges $p \in \mathcal{P}(O_1) \cap \mathcal{P}(O_2)$:
    \begin{equation*}
        \begin{split}
             |p|_{T^{0}_1} = |p|_{T'_1}, |p|_{T^{0}_2} = |p|_{T'_2} &\text{ when } \Psi_{\mathcal{L}_1}(p) \neq \emptyset \text { and } \Psi_{\mathcal{L}_2}(p) \neq \emptyset\\
            |p|_{T^{0}_1} = |p|_{T^{0}_2} = |p|_{T'_2} &\text{ when } \Psi_{\mathcal{L}_1}(p) = \emptyset \text{ and } \Psi_{\mathcal{L}_2}(p) \neq \emptyset \\
            |p|_{T^{0}_2} = |p|_{T^{0}_1} = |p|_{T'_1} &\text{ when } \Psi_{\mathcal{L}_1}(p) \neq \emptyset \text{ and } \Psi_{\mathcal{L}_2}(p) = \emptyset \\
            |p|_{T^{0}_1} = |p|_{T^{0}_2} = 0 &\text{ when } \Psi_{\mathcal{L}_1}(p) = \emptyset \text{ and } \Psi_{\mathcal{L}_2}(p) = \emptyset \\
        \end{split}
    \end{equation*}
     Since the lengths of the uncommon internal splits are unchanged from $(T'_1, T'_2)$ to $(T^{0}_1, T^{0}_2)$, 
    the support of the path space will be the same for both pairs (common edges lengths do not influence the support for the geodesic  \citep[Section 4]{OwenMegan2011}). For those edges that are inconsequential in one or both trees, the difference in length drops to zero; it remains the same for edges that are consequential for both (for such an edge $p$, $|p|_{T'_1} - |p|_{T'_2} =|p|_{T^{0}_1} - |p|_{T^{0}_2}$). We denote by $K^{0}$ the common edges (including external edges) that are consequential for both extension spaces (i.e. $s \in K^{0} \subseteq K = C \cup H$ when $\Psi_{\mathcal{L}_1}(s) \neq \emptyset$ and $\Psi_{\mathcal{L}_2}(s) \neq \emptyset$) 
    \begin{equation*}
        \begin{split}
            d(T^{0}_1, T^{0}_2) &= \left|\left| \left(||A_1||_{T^{0}_1} + ||B_1||_{T^{0}_2}, \hdots, ||A_k||_{T^{0}_1} + ||B_k||_{T^{0}_2}, \left(|s|_{T^{0}_1} - |s|_{T^{0}_2}\right)_{s \in K} \right) \right|\right|\\ 
            &= \left|\left| \left(||A_1||_{T^{0}_1} + ||B_1||_{T^{0}_2}, \hdots, ||A_k||_{T^{0}_1} + ||B_k||_{T^{0}_2}, \left(|s|_{T^{0}_1} - |s|_{T^{0}_2}\right)_{s \in K^0}, 0\right) \right|\right|\\
            &= \left|\left| \left(||A_1||_{T'_1} + ||B_1||_{T'_2}, \hdots, ||A_k||_{T'_1} + ||B_k||_{T'_2}, \left(|s|_{T'_1} - |s|_{T'_2}\right)_{s \in K^0}, 0\right) \right|\right|\\
            &\leq d(T'_1, T'_2).
        \end{split}
    \end{equation*}
    Although all common edges in the tree pair $(T^0_1, T^0_2)$ hold the conditions for mutual restrictions, some of the uncommon edges between both trees may not. Define the inconsequential edges in $T^0_1$ that are not common with $T^0_2$ by $(p^1_1, \hdots, p^1_{r_1})$, and the inconsequential edges in $T^0_2$ not common with $T^0_1$ by $(p^2_1, \hdots, p^2_{r_2})$. For each $i=1, 2$, define $T^{j}_i = T^{(j-1) \perp p^i_{j}}_i$ the projection of $T^{j-1}_i$ towards the face of $O_i$ defined by the length of $p^i_j$ being equal to zero. By repeatedly applying Lemma \ref{lemma:ProjectingCloser}, we have 
    \begin{equation*}
        d(T^0_1, T^0_2) \geq d(T^1_1, T^0_2) \geq \hdots \geq d(T^{r_1}_1, T^0_2) \geq d(T^{r_1}_1, T^1_2) \geq \hdots \geq d(T^{r_1}_1, T^{r_2}_2).
    \end{equation*}
    Thus, $d(T^{r_1}_1, T^{r_2}_2) \leq d(T'_1, T'_2)$, and by construction, $(T^{r_1}_1, T^{r_2}_2) \in \lceil{E^{O_1}_{T_1} \times E^{O_2}_{T_2}}\rceil$. 
\end{proof}

Having identified a subspace for each orthant pair that contains a minimum-distance tree pair, we can now guarantee algorithmic convergence. Our goal is to minimize $d: \mathcal{T}^{\mathcal{N}} \times \mathcal{T}^{\mathcal{N}} \mapsto \mathbb{R}_{\geq 0}$. Since $\mathcal{T}^{N}$ is of non-positive curvature, the function $d$ is doubly convex  \citep[Definition 1.9, Corollary 2.5]{sturm}; i.e., $d$ is a convex function on $\mathcal{T}^{\mathcal{N}} \times \mathcal{T}^{\mathcal{N}}$, which is in itself also a geodesic space  \citep[Proposition 5.3]{Bridson:1999ky}.
We now show that the search region $\lceil{E^{O_1}_{T_1} \times E^{O_2}_{T_2}}\rceil$ is convex and compact, which will ensure the convergence of our proposed gradient descent method (\cite[Theorem 6.4]{NLoptimization}). 

\begin{lemma}
    \label{lemma:convexityInOrthant}
    For any two trees $T_1$ and $T_2$ with leaf sets $\mathcal{L}_1, \mathcal{L}_2 \subseteq \mathcal{N}$ and an orthant pair $(O_1, O_2) \in  C_{T_1}^{\mathcal{N}} \times  C_{T_2}^{\mathcal{N}}$, the orthant-specific mutually restricted extension space $\lceil{E^{O_1}_{T_1} \times E^{O_2}_{T_2}}\rceil$ is a convex, closed and bounded subspace of $\mathcal{T}^{\mathcal{N}}\times \mathcal{T}^{\mathcal{N}}$. 
\end{lemma}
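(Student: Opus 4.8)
The plan is to reduce the statement to elementary linear algebra by working inside the Euclidean subspace $O_1 \times O_2 \subseteq \mathcal{T}^{\mathcal{N}} \times \mathcal{T}^{\mathcal{N}}$ (Remark \ref{Remark:euclideanOrthants}), where a pair $(T'_1, T'_2)$ is identified with its vector of edge lengths $(\mathbf{x}_1, \mathbf{x}_2) \in \mathbb{R}^{2n-3} \times \mathbb{R}^{2n-3}$. First I would observe that in these coordinates $E^{O_i}_{T_i}$ is exactly the solution set of the linear system \eqref{eq:EquationsOrthant}, i.e.\ $M_{\mathcal{L}_i}^{O_i}\mathbf{x}_i = \mathbf{v}_{T_i}$ together with $\mathbf{x}_i \geq 0$. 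Each of the four conditions for mutual restriction is likewise a linear condition on $(\mathbf{x}_1, \mathbf{x}_2)$: ``$|p|_{T'_1} = |p|_{T'_2}$'' equates two coordinates, while ``$|p|_{T'_1}=0$'' and ``$|p|_{T'_1} = |p|_{T'_2} = 0$'' set coordinates to zero. Hence $\lceil{E^{O_1}_{T_1} \times E^{O_2}_{T_2}}\rceil$ is the intersection of the affine subspace defined by all of these equalities with the closed convex cone $\{\mathbf{x}_1 \geq 0\}\cap\{\mathbf{x}_2 \geq 0\}$; convexity and closedness then follow immediately, since intersections of convex (resp.\ closed) sets are convex (resp.\ closed).

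For boundedness I would bound each coordinate of $(\mathbf{x}_1, \mathbf{x}_2)$ separately; together with nonnegativity this places the set in a box. For a consequential edge $p_j$ of $O_i$ with $\Psi_{\mathcal{L}_i}(p_j) = q_r$, the corresponding row of \eqref{eq:EquationsOrthant} reads $\sum_{p:\Psi_{\mathcal{L}_i}(p)=q_r}|p|_{T'_i} = |q_r|_{T_i}$, a sum of nonnegative terms, so $|p_j|_{T'_i} \leq |q_r|_{T_i}$: every consequential edge length is bounded by a constant depending only on $T_i$. An inconsequential edge of $O_i$ that is not common with the other orthant is forced to length $0$ by the third or fourth mutual-restriction condition. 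An inconsequential common edge $p \in \mathcal{P}(O_1)\cap\mathcal{P}(O_2)$ is either inconsequential in both trees, hence set to $0$ by the second condition, or inconsequential in exactly one, say $T_1$; then the first condition gives $|p|_{T'_1} = |p|_{T'_2}$ with $p$ consequential in $T_2$, so this common value is bounded by the consequential-edge bound. The external edges $\mathcal{H}(T'_1) = \mathcal{H}(T'_2)$ are always common, and the external edge of a leaf $\ell$ is consequential for $T_i$ precisely when $\ell \in \mathcal{L}_i$, so each external edge is covered by one of the cases just listed. Thus all coordinates are bounded, so $\lceil{E^{O_1}_{T_1} \times E^{O_2}_{T_2}}\rceil$ is bounded.

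Combining the pieces, the set is a convex, closed and bounded subset of the Euclidean subspace $O_1 \times O_2$; since each maximum-dimensional orthant is closed in $\mathcal{T}^{\mathcal{N}}$, it is also closed as a subset of $\mathcal{T}^{\mathcal{N}}\times\mathcal{T}^{\mathcal{N}}$, which finishes the proof. I expect the main obstacle to be purely organizational rather than conceptual: verifying that the case analysis for inconsequential edges (common versus uncommon; inconsequential in one tree versus both) is genuinely exhaustive, and checking that the external edges, which are shared by every pair of orthants, fall cleanly into this classification so that none of them escapes the boundedness estimate.
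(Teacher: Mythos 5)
Your proof is correct and follows essentially the same route as the paper's: convexity and closedness come from expressing $E^{O_i}_{T_i}$ and each mutual-restriction condition as linear equalities inside the Euclidean product $O_1 \times O_2$ (intersected with the nonnegativity constraints), and boundedness comes from the same case analysis — consequential edges bounded by the corresponding edge length of $T_i$ via the row sums of \eqref{eq:EquationsOrthant}, uncommon inconsequential edges forced to zero, and common inconsequential edges either zero or tied to a consequential edge of the other tree. Your extra care with external edges and the exhaustiveness of the cases is a fine elaboration of the same argument, not a different approach.
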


\begin{proof}

     $O_1 \times O_2$ is an Euclidean subspace of the geodesic space $\mathcal{T}^{\mathcal{N}} \times \mathcal{T}^{\mathcal{N}}$ (Remark \ref{Remark:euclideanOrthants}), and $E^{O_1}_{T_1}$ and $E^{O_2}_{T_2}$ can both be described through a set of linear equations \eqref{eq:EquationsOrthant}, which implies that $E^{O_1}_{T_1} \times E^{O_2}_{T_2}$ is closed and convex. For each $p \in \mathcal{P}(O_1) \cup \mathcal{P}(O_2)$, define $\lceil O_1 \times O_2 \rceil^{p}$ to be the subset of trees that hold the condition for mutual restriction that applies to $p$. Each such subset can be expressed by a system of linear equations, which means each is a closed and convex subset. Thus, we can write the mutually restricted extension space as the intersection of a finite number of closed and convex subspaces,
     \begin{equation*}
         \lceil{E^{O_1}_{T_1} \times E^{O_2}_{T_2}}\rceil = E^{O_1}_{T_1} \times E^{O_2}_{T_2} \cap \left\{\bigcap_{\substack{p \in \mathcal{P}(O_1) \cup \mathcal{P}(O_2)\\ \Psi_{\mathcal{L}_1}(p) = \emptyset \text{ or } \Psi_{\mathcal{L}_2}(p) = \emptyset}} \lceil O_1 \times O_2 \rceil^{p}\right\},
     \end{equation*}
     which is therefore a convex and closed subspace. 
     
     Finally, consider a tree pair $(T'_1, T'_2) \in  \lceil{E^{O_1}_{T_1} \times E^{O_2}_{T_2}}\rceil$ and consider $|p|_{T'_1}$ for any edge $p \in \mathcal{P}(O_1)$. If $p$ is consequential, there is an edge $q \in \mathcal{P}(T_1)$ such that $q = \Psi_{\mathcal{L}_1}(p)$ and we know that $|q|_{T_1} = \sum_{p' | q = \Psi_{\mathcal{L}_1}(p)} |p'|_{T'_1}$, so that $0 \leq |p|_{T'_1} \leq |q|_{T_1}$. If $p$ is inconsequential, then either $|p|_{T'_1} = 0$, or $|p|_{T'_1} = |p|_{T'_2}$ where $p$ is consequential for $O_2$ (implying $0 \leq |p|_{T'_1} = |p|_{T'_2} \leq |q|_{T_2}$ for $q \in \mathcal{P}(T_2)$ such that $q = \Psi_{\mathcal{L}_2}(p)$). Thus, all edges in $T'_1$ are bounded, and likewise for $T'_2$. Therefore, $\lceil{E^{O_1}_{T_1} \times E^{O_2}_{T_2}}\rceil$ is bounded. 
\end{proof}

\subsection{Distances as a reduced gradient problem}
\label{subSect:Distance2ReducedGradient}

Having established properties of the search region, we now formulate the search for an optimal pair as a reduced gradient problem. Given $T_1$ and $T_2$ with leaf sets $\mathcal{L}_1, \mathcal{L}_2 \subseteq \mathcal{N}$ and orthants $O_1 \in C^{\mathcal{N}}_{T_1}$, $O_2 \in C^{\mathcal{N}}_{T_2}$, we wish to find $(T^{*}_1, T^{*}_2) \in \lceil E^{O_1}_{T_1} \times  E^{O_2}_{T_1}\rceil$ such that $d(T^{*}_1,T^{*}_2) \leq d(T'_1, T'_2)$ for all $(T'_1, T'_2) \in E^{O_1}_{T_1} \times E^{O_2}_{T_2}$. Consider the projection matrices $M_{\mathcal{L}_1}^{O_1}$ and $M_{\mathcal{L}_2}^{O_2}$ and fixed vectors $\mathbf{v}_{T_1}$ and $\mathbf{v}_{T_2}$ that describe the orthant-specific extension spaces $E^{O_1}_{T_1}$ and $E^{O_2}_{T_2}$. That is, the edge-lengths vectors $\mathbf{x}_{T'_1}$ and $\mathbf{x}_{T'_2}$ corresponding to any pair $(T'_1, T'_2) \in \lceil E^{O_1}_{T_1}\times E^{O_2}_{T_2}\rceil$ will be such that $M_{\mathcal{L}_1}^{O_1} \mathbf{x}_{T'_1} = \mathbf{v}_{T_1} \text{ and }  M_{\mathcal{L}_2}^{O_2} \mathbf{x}_{T'_2} = \mathbf{v}_{T_2}$. By Lemma \ref{Lemma:distanceInconsequentialBranches}, some of the elements of $\mathbf{x}_{T'_i}$ ($i=1,2$) equal each other or equal zero. These values correspond to all the inconsequential edges, i.e., the zero columns in the projection matrices $M_{\mathcal{L}_1}^{O_1}$ and $M_{\mathcal{L}_2}^{O_2}$. 
Define the reduced matrices $\ddot{M}^{i} = \left[M_{\mathcal{L}_i}^{O_i}\right]_{\mathbf{R_i}}$, where $\mathbf{R_i}$ gives the indices of non-zero columns in $M_{\mathcal{L}_i}^{O_i}$. Similarly, we define  $\ddot{\mathbf{x}}_{T'_i} := [\mathbf{x}_{T'_i}]_{\mathbf{R_i}}$ by subsetting to consequential edges in $\mathbf{x}_{T'_i}$. Note that since the only columns eliminated from the projection matrices are zero vectors, the system of linear equations defined by
$M_{T_i}^{O_i} \mathbf{x}_{T'_i} = \mathbf{v}_i$
is equivalent to $\ddot{M}^{i} \ddot{\mathbf{x}}_{T'_i} = \mathbf{v}_i$. Therefore, defining
\begin{equation}
    \dot{\mathbf{M}} = \begin{pmatrix}
        \ddot{M}^{1} & 0 \\
        0 & \ddot{M}^{2}
    \end{pmatrix} \text{, } \dot{\mathbf{x}} = \begin{pmatrix}
        \ddot{\mathbf{x}}_{T'_1}\\
        \ddot{\mathbf{x}}_{T'_2}
    \end{pmatrix} \text{ and } \dot{\mathbf{v}} = \begin{pmatrix}
        \mathbf{v}_1 \\
        \mathbf{v}_2
    \end{pmatrix},
\end{equation}
we can describe $\lceil E^{O_1}_{T_1}\times E^{O_2}_{T_2}\rceil$ through the system of linear equations
\begin{equation}
\label{Eq:CompleteSystemEquation}
    \dot{\mathbf{M}} \dot{\mathbf{x}} = \dot{\mathbf{v}} \text{ with } \dot{\mathbf{x}} \geq 0.
\end{equation}

Given a solution to \eqref{Eq:CompleteSystemEquation}, we can construct trees $(T'_1, T'_2) \in \lceil E^{O_1}_{T_1} \times  E^{O_2}_{T_1} \rceil$ by assigning the values in $\dot{\mathbf{x}}$ corresponding to edges in $O_i$ to the lengths of the edges in $T'_i$, letting $|p|_{T_1} = |p|_{T_2}$ for all common edges $p$ that are inconsequential for one of the trees, $|p|_{T_1} = |p|_{T_2} = 0$ for common edges inconsequential in both trees and $|p|_{T_i} = 0$ for uncommon edges that are inconsequential in the respective tree. Note that in this way, all values of edges in $T'_1$ and $T'_2$ are unambiguously defined. We let $(T'_1(\dot{\mathbf{x}})$, $T'_2(\dot{\mathbf{x}}))$ refer to the unique pair of trees constructed in this manner from a solution vector $\dot{\mathbf{x}}$. We summarise the above in the following result.

\begin{algorithm}
\small
\caption{A reduced gradient method to find BHV distances between orthant-specific extension spaces (see additional details on Page 23)}
\label{alg:OrthantExtensionDistance}
\begin{algorithmic}
\STATE{Set initial values: $\dot{\mathbf{x}}^0_j = \sum_i \mathbf{1}_{\{\dot{\mathbf{M}}[i,j] > 0\}} \times \dot{\mathbf{v}}_i/\#\{j: \dot{\mathbf{M}}[i,j] = 1\}$. }
\STATE{Define initial index sets: 
For each $i$, add the index $j$ of the first column such that $\dot{\mathbf{M}}[i,j] = 1$ to \textbf{D}. Add all $j'>j$ such that $\dot{\mathbf{M}}[i,j'] = 1$ to \textbf{F}. Set \textbf{N}$= \emptyset$. }
\STATE{Initialize $c_{\text{conj}} = 1$. Set tolerance thresholds $\texttt{Tol1}$ and $\texttt{Tol2}$.}
\WHILE{global minimum not reached}
\STATE{Compute gradient 
 $\nabla \varphi(\dot{\mathbf{x}}^{t}_{\textbf{F}}) = \nabla_{\textbf{F}} \delta(\dot{\mathbf{x}}^{t}) -  \dot{\mathbf{M}}_{\textbf{F}}^{\top} \dot{\mathbf{M}}_{\textbf{D}}^{-\top}\nabla_{\textbf{D}} \delta(\dot{\mathbf{x}}^{t})  $}
\IF{$||\nabla \varphi(\dot{\mathbf{x}}^{t}_{\textbf{F}})||_\infty < \texttt{Tol1}$ } 
\STATE{Compute $\overline{g}_{\textbf{N}} = \nabla_{\textbf{N}} \delta(\dot{\mathbf{x}}^{t}) - \dot{\mathbf{M}}_{\textbf{N}}^{\top}\dot{\mathbf{M}}_{\textbf{D}}^{-\top} \nabla_{\textbf{D}} \delta(\dot{\mathbf{x}}^{t})$.} 
\IF{$\overline{g}_{\textbf{N}} \geq 0$}
\STATE{\textbf{stop while:} global minimum has been reached.}
\ELSE
\STATE{Define $\textbf{N}_{p} = \{j \in \textbf{N} \mid \overline{g}_{\textbf{N}}[j]<0\}$}
\STATE{Update $\textbf{F} = \textbf{F} \cup \textbf{N}_{p}$ and $\textbf{N} = \textbf{N}\setminus \textbf{N}_{p}$}
\ENDIF
\ENDIF
\STATE{Compute $\mathbf{d}^{t}_{\textbf{F}} = - \nabla \varphi(\dot{\mathbf{x}}^{t}_{\textbf{F}}) + \mathbf{1}_{\{c_{\text{conj}}=1\}}\frac{\langle\nabla \varphi(\dot{\mathbf{x}}^t_{\textbf{F}}), \nabla \varphi(\dot{\mathbf{x}}^t_{\textbf{F}}) - \nabla \varphi(\dot{\mathbf{x}}^{t-1}_{\textbf{F}})\rangle}{||\nabla \varphi(\dot{\mathbf{x}}^{t-1}_{\textbf{F}})||^2} \mathbf{d}^{t-1}_{\textbf{F}} $}
\STATE{Set $\mathbf{d}^t_{\textbf{D}} = - \dot{\mathbf{M}}_{\textbf{D}}^{-1} \dot{\mathbf{M}}_{ \textbf{F}} \mathbf{d}^t_{\textbf{F}}$ and $\mathbf{d}^t_{\textbf{N}} = 0$. Increase $c_{\text{conj}} \leftarrow c_{\text{conj}} + 1$}
\STATE{Find $\tau_{\text{max}} = \max\{\tau \geq 0 \mid \dot{\mathbf{x}}^t + \tau \mathbf{d}^{t} \geq 0 \}$ and $h(\tau_{\text{max}}) = \langle \nabla \delta(\dot{\mathbf{x}}^{t} + \tau_{\text{max}} \mathbf{d}^{t}), \mathbf{d}^{t} \rangle$}
\IF{$h(\tau_{\text{max}}) \leq 0$}
\STATE {Set $\tau_{0} = \tau_{\text{max}}$}
\ELSE
\STATE{Set $\tau_{\text{left}} = 0$, $\tau_{\text{right}} = \tau_{\text{max}}$ and $\tau_{0} = \frac{\tau_{\text{left}} + \tau_{\text{right}}}{2}$}
\WHILE{$|h(\tau_{0}) = \langle \nabla \delta(\dot{\mathbf{x}}^{t} + \tau_{0} \mathbf{d}^{t}), \mathbf{d}^{t} \rangle| > \texttt{Tol2}$}
\STATE{ \textbf{if} $h(\tau^{*}) > 0$ \textbf{then} set $\tau_{\text{right}} = \tau^{*}$ and $\tau^{*} = \frac{\tau_{\text{left}} + \tau_{\text{right}}}{2}$}
\STATE{\textbf{else} set $\tau_{\text{left}} = \tau^{*}$ and $\tau^{*} = \frac{\tau_{\text{left}} + \tau_{\text{right}}}{2}$.}
\ENDWHILE
\ENDIF
\IF{$\dot{\mathbf{x}}^t_j + \tau_0d^t_j = 0$ for some $j \in \textbf{D} \cup \textbf{F}$}
\STATE{Select $j \in \textbf{D}\cup \textbf{F}$ such that $\dot{\mathbf{x}}^t_j + \tau_0d^t_j = 0$}
\IF{$j \in \textbf{F}$}
\STATE{Set $\textbf{F} = \textbf{F}\setminus\{j\}$ and $\textbf{N} = \textbf{N} \cup \{j\}$}
\ELSE
\STATE{Select $j'\in \textbf{F}$ such that $\dot{\mathbf{M}}[i,j] = \dot{\mathbf{M}}[i,j'] = 1$ for some index $i$}
\STATE{Set $\textbf{D} = \textbf{D}\setminus (\{j\} \cup \{j'\})$, $\textbf{F} = \textbf{F}\setminus \{j'\}$ and $\textbf{N} = \textbf{N}\cup\{j\}$}
\ENDIF
\ENDIF
\STATE{Update $\dot{\mathbf{x}}^{t+1} = \dot{\mathbf{x}}^{t} + \tau^{*} \mathbf{d}_{t}$}
\STATE{\textbf{if} $c_{\text{conj}} + 1 > 15$ \textbf{then} $c_{\text{conj}}  = 1$ \textbf{else} $c_{\text{conj}} \leftarrow c_{\text{conj}} + 1 $}
\ENDWHILE
\RETURN {$(T'_1(\dot{\mathbf{x}}^t), T'_2(\dot{\mathbf{x}}^t))$ and $\sqrt{\delta(T'_1(\dot{\mathbf{x}}^t),T'_2(\dot{\mathbf{x}}^t))}$ }
\end{algorithmic}
\end{algorithm}

\begin{lemma}    \label{lemma:EquivalentMinimizingProblem}
    Given $\dot{\mathbf{M}}$ and $\dot{\mathbf{v}}$ for trees $T_1 \in \mathcal{T}^{\mathcal{L}_1}$ and $T_2 \in \mathcal{T}^{\mathcal{L}_2}$, consider 
    \begin{equation}
        \label{eq:RealMinimizingProblem}
            \dot{\mathbf{x}}^{*} \in  \underset{\dot{\mathbf{M}} \dot{\mathbf{x}} = \dot{\mathbf{v}}, \dot{\mathbf{x}} \geq 0}{\arg\min} d\left(T'_1(\dot{\mathbf{x}}), T'_2(\dot{\mathbf{x}})\right).
    \end{equation}
    Then $d(T'_1(\dot{\mathbf{x}}^{*}), T'_2(\dot{\mathbf{x}}^{*})) \leq d(T'_1,T'_2)$ for any $(T'_1,T'_2) \in O_1 \times O_2$. 
    
\end{lemma}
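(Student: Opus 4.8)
The plan is to recognize the optimization problem \eqref{eq:RealMinimizingProblem} as nothing more than a re-parametrization of the problem of minimizing the BHV distance $d$ over the orthant-specific mutually restricted extension space $\lceil{E^{O_1}_{T_1} \times E^{O_2}_{T_2}}\rceil$, and then to invoke Lemma \ref{Lemma:distanceInconsequentialBranches} to transfer the resulting bound to all of $E^{O_1}_{T_1} \times E^{O_2}_{T_2}$, using Lemma \ref{lemma:convexityInOrthant} together with continuity of $d$ to guarantee that the minimum defining $\dot{\mathbf{x}}^{*}$ is attained.

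First I would make precise the correspondence between feasible points of \eqref{Eq:CompleteSystemEquation} and pairs in $\lceil{E^{O_1}_{T_1} \times E^{O_2}_{T_2}}\rceil$ that is implicit in the construction of $(T'_1(\dot{\mathbf{x}}), T'_2(\dot{\mathbf{x}}))$. Given any $\dot{\mathbf{x}}$ with $\dot{\mathbf{M}}\dot{\mathbf{x}} = \dot{\mathbf{v}}$ and $\dot{\mathbf{x}} \geq 0$, the reconstruction assigns the consequential edge lengths of $T'_i(\dot{\mathbf{x}})$ from the entries of $\dot{\mathbf{x}}$, so that $M_{\mathcal{L}_i}^{O_i}\mathbf{x}_{T'_i(\dot{\mathbf{x}})} = \mathbf{v}_i$ because this system is equivalent to the reduced one $\ddot{M}^{i}\ddot{\mathbf{x}}_{T'_i} = \mathbf{v}_i$ (the deleted columns being zero), and it assigns the remaining, inconsequential, edge lengths exactly so that the four conditions for mutual restriction hold; hence $T'_i(\dot{\mathbf{x}}) \in E^{O_i}_{T_i}$ and $(T'_1(\dot{\mathbf{x}}), T'_2(\dot{\mathbf{x}})) \in \lceil{E^{O_1}_{T_1} \times E^{O_2}_{T_2}}\rceil$. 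Conversely, for any $(T'_1, T'_2) \in \lceil{E^{O_1}_{T_1} \times E^{O_2}_{T_2}}\rceil$, the vector $\dot{\mathbf{x}} = (\ddot{\mathbf{x}}_{T'_1}, \ddot{\mathbf{x}}_{T'_2})$ obtained by discarding inconsequential coordinates is feasible for \eqref{Eq:CompleteSystemEquation}, and because the conditions for mutual restriction pin the inconsequential edge lengths of $(T'_1, T'_2)$ to precisely the values the reconstruction assigns, we recover $(T'_1(\dot{\mathbf{x}}), T'_2(\dot{\mathbf{x}})) = (T'_1, T'_2)$. Thus the reconstruction maps the feasible set of \eqref{Eq:CompleteSystemEquation} onto $\lceil{E^{O_1}_{T_1} \times E^{O_2}_{T_2}}\rceil$, and since $d(T'_1(\dot{\mathbf{x}}), T'_2(\dot{\mathbf{x}}))$ is by definition the BHV distance between the two reconstructed trees,
\begin{equation*}
\min_{\dot{\mathbf{M}}\dot{\mathbf{x}} = \dot{\mathbf{v}},\ \dot{\mathbf{x}} \geq 0} d\left(T'_1(\dot{\mathbf{x}}), T'_2(\dot{\mathbf{x}})\right) \;=\; \min_{(T'_1,T'_2) \in \lceil{E^{O_1}_{T_1} \times E^{O_2}_{T_2}}\rceil} d(T'_1, T'_2).
\end{equation*}

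It then remains to check that the right-hand minimum is attained and to pass to the larger product space. By Lemma \ref{lemma:convexityInOrthant} and Remark \ref{Remark:euclideanOrthants}, $\lceil{E^{O_1}_{T_1} \times E^{O_2}_{T_2}}\rceil$ is a nonempty closed and bounded subset of the Euclidean space $O_1 \times O_2$, hence compact, and $d$ is continuous (being $1$-Lipschitz in each argument), so $d$ attains its minimum there; this minimal value is exactly $d(T'_1(\dot{\mathbf{x}}^{*}), T'_2(\dot{\mathbf{x}}^{*}))$ for a minimizer $\dot{\mathbf{x}}^{*}$ of \eqref{eq:RealMinimizingProblem}. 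Finally, Lemma \ref{Lemma:distanceInconsequentialBranches} says this value equals $\inf\{d(T'_1, T'_2) : (T'_1, T'_2) \in E^{O_1}_{T_1} \times E^{O_2}_{T_2}\}$, so chaining the equalities gives $d(T'_1(\dot{\mathbf{x}}^{*}), T'_2(\dot{\mathbf{x}}^{*})) \leq d(T'_1, T'_2)$ for every $(T'_1, T'_2) \in E^{O_1}_{T_1} \times E^{O_2}_{T_2}$ (the pairs $(T'_1,T'_2)$ in the statement being understood to range over the orthant-specific extension spaces inside $O_1$ and $O_2$, as in the discussion preceding the lemma). I expect the only genuine work to be the bookkeeping of the first step: verifying that deleting the zero columns and pinning the inconsequential edges really is an onto, distance-matching parametrization of $\lceil{E^{O_1}_{T_1} \times E^{O_2}_{T_2}}\rceil$; everything after that is a direct appeal to Lemmas \ref{Lemma:distanceInconsequentialBranches} and \ref{lemma:convexityInOrthant} and compactness.
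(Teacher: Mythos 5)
Your proposal is correct and follows essentially the same route as the paper: both establish that $\dot{\mathbf{x}} \mapsto (T'_1(\dot{\mathbf{x}}), T'_2(\dot{\mathbf{x}}))$ is an onto, distance-preserving parametrization of $\lceil{E^{O_1}_{T_1} \times E^{O_2}_{T_2}}\rceil$ by the feasible set of \eqref{Eq:CompleteSystemEquation} and then conclude via Lemma \ref{Lemma:distanceInconsequentialBranches}. Your extra use of Lemma \ref{lemma:convexityInOrthant} to justify attainment of the minimum, and your reading of the conclusion as ranging over $E^{O_1}_{T_1} \times E^{O_2}_{T_2}$ rather than all of $O_1 \times O_2$, are both sensible refinements consistent with the paper's intent.
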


\begin{proof}
    Take $\mathbf{U} = \left\{\dot{\mathbf{x}} \geq 0 | \dot{\mathbf{M}} \dot{\mathbf{x}} = \dot{\mathbf{v}} \right\}$. The function $\chi: \mathbf{U} \mapsto \lceil{E^{O_1}_{T_1} \times E^{O_2}_{T_2}}\rceil$ that maps $\chi(\dot{\mathbf{x}}) = (T'_1(\dot{\mathbf{x}}^{*}), T'_2(\dot{\mathbf{x}}^{*}))$ is bijective:
    \begin{enumerate}
        \item If $\dot{\mathbf{x}}_1 \neq \dot{\mathbf{x}}_2$ are two different solutions to \eqref{eq:EquationsOrthant}, then at least one consequential edge in $T'_1(\dot{\mathbf{x}}_1)$ or $T'_2(\dot{\mathbf{x}}_1)$ has a different length than the same consequential edge in $T'_1(\dot{\mathbf{x}}_2)$ or $T'_2(\dot{\mathbf{x}}_2)$. Thus, $\chi$ is injective.
        \item Given a pair of trees $(T'_1, T'_2) \in \lceil{E^{O_1}_{T_1} \times E^{O_2}_{T_2}}\rceil$, construct the vector $\dot{\mathbf{x}}$ by subsetting $\mathbf{x}_{T'_1}$ and $\mathbf{x}_{T'_2}$ to only consequential edges. Since these trees are in the extension spaces, $\dot{\mathbf{x}}$ would be a solution to \eqref{Eq:CompleteSystemEquation}, so $\chi$ is surjective. 
    \end{enumerate}
    Thus, finding the minimum distance pair in $\lceil{E^{O_1}_{T_1} \times E^{O_2}_{T_2}}\rceil$ is equivalent to solving \eqref{eq:RealMinimizingProblem}. The result follows from Lemma \ref{Lemma:distanceInconsequentialBranches}.
    
\end{proof}


\subsection{Objective function and gradient}
\label{subSect:GradientsBHV}

While we are interested in minimizing $d(\cdot, \cdot)$, in practice we solve the equivalent problem of minimizing $d^2$, 
\begin{equation}
    \label{def:gamma2}
    d^2\left(T'_1, T'_2\right) = \sum_{i = 1}^{k} \left(||A_{i}||_{T'_1} + ||B_{i}||_{T'_2}\right)^2 + \sum_{s \in K} \left(|s|_{T'_1} - |s|_{T'_2}\right)^2.
\end{equation}
Define $\delta\left(\dot{\mathbf{x}}\right) := d^2\left(T'_1\left(\dot{\mathbf{x}}\right), T'_2\left(\dot{\mathbf{x}}\right)\right)$. We denote by $\dot{x}^j_{p}$ the entry in $\dot{\mathbf{x}}$ corresponding to the edge $p \in \bigcup_{i = 1}^{k} \{A_i \cup B_i\} \cup K$ 
in $T'_j$. By construction of $\dot{\mathbf{x}}$, $\dot{x}^j_{p}$ is only well-defined if $p$ is a consequential edge for $T_j$.
For any subset of edges $S \subseteq \mathcal{P}(T'_j)$, denote by $\dot{S}$ the edges in $S$ that are consequential for $T_j$, and define $||\dot{S}|| = \sqrt{\sum_{p \in \dot{S}} \left(\dot{x}^{j}_p\right)^2}$. Given that any inconsequential uncommon edge is of length zero for trees in $\lceil{E^{O_1}_{T_1} \times E^{O_2}_{T_2}}\rceil$, then $||A_i||_{T'_1(\dot{\mathbf{x}})} = ||\dot{A}_i||$ and $||B_i||_{T'_2(\dot{\mathbf{x}})} = ||\dot{B}_i||$ for every $i = 1, \hdots, k$. We also use $\dot{K}$ to refer to the set of common splits that are consequential for both trees
Since any squared term in the last sum in \eqref{def:gamma2} involving an inconsequential edge is zero as well, we write the function $\delta$ as
\begin{equation}
    \label{equation:ObjectiveFunction}
    \delta\left(\dot{\mathbf{x}}\right) = \sum_{i = 1}^{k} \left(||\dot{A_{i}}|| + ||\dot{B_{i}}||\right)^2 + \sum_{s \in \dot{K}} \left(\dot{x}^1_{s} -\dot{x}^2_{s}\right)^2 
\end{equation}
Note the support $(\mathcal{A},\mathcal{B})$ will depend on $\dot{\mathbf{x}}$ indirectly through the trees $T'_1(\dot{\mathbf{x}})$ and $T'_2(\dot{\mathbf{x}})$, but the following lemma ensures that this dependence does not affect the continuity and convexity of the function. Additionally, the gradient method we employ requires $\delta(\dot{\mathbf{x}})$ to be continuously differentiable, which we also address.

\begin{lemma}
    \label{lemma:ObjFunctionContConv}
    The function $\delta: \mathbb{R}_{\geq 0}^{\mathbf{r_1} + \mathbf{r_2}} \mapsto \mathbb{R}_{\geq 0}$, where $\mathbf{r}_j$ is the number of non-zero columns in $M^{O_j}_{\mathcal{L}_j}$, is a continuous and convex function. Moreover, $\delta$ is continuously differentiable in the interior of the domain $\mathbb{R}_{> 0}^{\mathbf{r_1} + \mathbf{r_2}}$. 
\end{lemma}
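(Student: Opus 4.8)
The plan is to prove continuity and convexity by exhibiting $\delta$ as a composition, and to obtain continuous differentiability on the interior by combining convexity with the fact that the squared BHV distance is, locally, a minimum of finitely many smooth functions. For the first point, observe that $\dot{\mathbf{x}} \mapsto (T'_1(\dot{\mathbf{x}}), T'_2(\dot{\mathbf{x}}))$ is a \emph{linear} map from $\mathbb{R}_{\geq 0}^{\mathbf{r_1}+\mathbf{r_2}}$ into $O_1 \times O_2$: by construction each edge length of $T'_1(\dot{\mathbf{x}})$ or $T'_2(\dot{\mathbf{x}})$ is either a coordinate of $\dot{\mathbf{x}}$, identically zero, or (for a common inconsequential edge matched across the two trees) a coordinate of $\dot{\mathbf{x}}$ coming from the other block. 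By Remark \ref{Remark:euclideanOrthants} a Euclidean line segment in $O_1 \times O_2$ is a geodesic of $\mathcal{T}^{\mathcal{N}} \times \mathcal{T}^{\mathcal{N}}$, so this map sends segments to geodesics; since $d$ is continuous and convex on $\mathcal{T}^{\mathcal{N}} \times \mathcal{T}^{\mathcal{N}}$, its composition with the map is continuous and convex, and as that composition is nonnegative and $t \mapsto t^2$ is convex and nondecreasing on $[0,\infty)$, the square $\delta$ is continuous and convex.

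For continuous differentiability, fix an interior point $\dot{\mathbf{x}}^{*} \in \mathbb{R}_{>0}^{\mathbf{r_1}+\mathbf{r_2}}$. There every consequential edge of $T'_1(\dot{\mathbf{x}})$ and $T'_2(\dot{\mathbf{x}})$ is strictly positive and the inconsequential ones are pinned (to zero, or to a counterpart that is itself consequential and positive), so on a small enough neighbourhood $U$ of $\dot{\mathbf{x}}^{*}$ the topologies of $T'_1(\dot{\mathbf{x}})$ and $T'_2(\dot{\mathbf{x}})$ are constant. By \cite{OwenMegan2011}, over $U$ the quantity $d^2\bigl(T'_1(\dot{\mathbf{x}}), T'_2(\dot{\mathbf{x}})\bigr)$ is the minimum, over the finitely many supports $(\mathcal{A},\mathcal{B})$ of path spaces between these two fixed topologies, of the function $F_{(\mathcal{A},\mathcal{B})}(\dot{\mathbf{x}})$ obtained by substituting the edge lengths of $T'_1(\dot{\mathbf{x}})$ and $T'_2(\dot{\mathbf{x}})$ into the square of \eqref{eq:FirstDistance}. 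Each $F_{(\mathcal{A},\mathcal{B})}$ is $C^{\infty}$ on $U$: the terms over common and external edges are polynomial in the (pinned or free) coordinates, and the only other possible non-smoothness in \eqref{eq:FirstDistance} is the vanishing of some $\|A_i\|_{T'_1}$ or $\|B_i\|_{T'_2}$; but every edge appearing in an $A_i$ (resp. $B_i$) is a consequential internal edge of $O_1$ (resp. $O_2$) — an inconsequential internal edge of $O_1$ is either pinned to zero, hence not in the topology of $T'_1(\dot{\mathbf{x}})$, or common with $O_2$ and therefore in the common set $C$, hence absent from every $A_i$ — so $\|A_i\|_{T'_1}$ and $\|B_i\|_{T'_2}$ are bounded below by a positive constant on $U$, and composing with the linear map above preserves smoothness. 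Thus on $U$ the function $\delta$ is a minimum of finitely many $C^{\infty}$ functions.

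This, with convexity, forces differentiability at $\dot{\mathbf{x}}^{*}$. The one-sided directional derivative of a finite minimum of $C^{1}$ functions equals the minimum, over the functions active at $\dot{\mathbf{x}}^{*}$, of their directional derivatives $v \mapsto \langle \nabla F_{(\mathcal{A},\mathcal{B})}(\dot{\mathbf{x}}^{*}), v\rangle$, so $v \mapsto \delta'(\dot{\mathbf{x}}^{*}; v)$ is concave and positively homogeneous; since $\delta$ is convex this map is also convex, hence affine, hence — being positively homogeneous — linear. A convex function whose directional-derivative map at a point is linear is differentiable at that point, so $\delta$ is differentiable at $\dot{\mathbf{x}}^{*}$. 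As $\dot{\mathbf{x}}^{*}$ was an arbitrary interior point, and a finite convex function differentiable throughout an open set has a continuous gradient there, $\delta$ is continuously differentiable on $\mathbb{R}_{>0}^{\mathbf{r_1}+\mathbf{r_2}}$.

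The main obstacle is the differentiability statement: a priori $d^2$ could have a ``kink'' where the optimal support changes, i.e.\ where two support functions $F_{(\mathcal{A},\mathcal{B})}$ agree in value but not in gradient. The argument above sidesteps a direct case analysis of the GTP branching by invoking convexity of $\delta$, which is precisely what excludes such a kink; the only input required from \cite{OwenMegan2011} is that the distance is a minimum over finitely many support formulas that are smooth once the participating internal edges are bounded away from zero — a condition the mutual-restriction construction guarantees on the interior of the domain of $\delta$.
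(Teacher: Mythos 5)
Your first paragraph (continuity and convexity via the linear map $\dot{\mathbf{x}} \mapsto (T'_1(\dot{\mathbf{x}}), T'_2(\dot{\mathbf{x}}))$ into the Euclidean product $O_1 \times O_2$, composed with the convex distance) is essentially the paper's argument and is fine. The differentiability part, however, rests on a claim that is false: you assert that near an interior point $d^2\bigl(T'_1(\dot{\mathbf{x}}), T'_2(\dot{\mathbf{x}})\bigr)$ equals the minimum, over \emph{all} path-space supports $(\mathcal{A},\mathcal{B})$ between the two fixed topologies, of the smooth formula $F_{(\mathcal{A},\mathcal{B})}$ obtained from \eqref{eq:FirstDistance}. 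In the Owen--Provan theory, the formula gives the geodesic length only for the support satisfying the ratio-ordering condition $\|A_1\|/\|B_1\| \leq \cdots \leq \|A_k\|/\|B_k\|$ (together with their second condition); for a support violating this ordering the formula is not the length of any realizable path, and by Minkowski's inequality a refinement of the geodesic's support always yields a formula value \emph{less than or equal to} the true distance, often strictly less. Concretely, take $T'_1$ with uncommon splits $a_1, a_2$ of lengths $1, \epsilon$ and $T'_2$ with uncommon splits $b_1, b_2$ of lengths $\epsilon, 1$, where the only compatible mixed pair is $\{a_2, b_1\}$ (such a configuration exists already for five leaves). The refined support $(\{a_1\},\{a_2\}),(\{b_1\},\{b_2\})$ gives formula value $\sqrt{2}\,(1+\epsilon)$, but it violates the ratio ordering, and the geodesic is the cone path of length $2\sqrt{1+\epsilon^2}$. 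So $\min_{(\mathcal{A},\mathcal{B})} F_{(\mathcal{A},\mathcal{B})}$ is strictly smaller than $d^2$ there, and your subsequent Danskin-type computation of directional derivatives is being applied to a function that is not $\delta$.

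The convexity-forces-linearity upgrade you describe (one-sided directional derivatives of a finite minimum of $C^1$ functions, concave and convex hence linear, hence differentiable, hence $C^1$ on an open set) is sound in itself, but it needs a correct local representation of $\delta$, and the set of supports for which the formula is valid (those satisfying the ratio ordering) varies with $\dot{\mathbf{x}}$, so no fixed finite family of smooth majorants realizing $\delta$ as their pointwise minimum is immediately available. This is precisely the difficulty the paper confronts head-on: it writes down the candidate gradient entries explicitly, observes that they only involve the ratios $\|\dot{B_i}\|/\|\dot{A_i}\|$, and verifies that these ratios — and hence the gradient — agree across all supports that are simultaneously valid at a point where the active support changes, which is what makes the derivative well defined and continuous on the interior. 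To repair your argument you would need either this consistency-of-ratios verification or a correct point-dependent representation of $d^2$ (e.g., as a minimum only over supports satisfying the ordering condition at the points in question), at which stage you are doing the paper's work anyway.
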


\begin{proof}
    Consider the map $\chi: \mathbb{R}_{\geq 0}^{\mathbf{r_1} + \mathbf{r_2}} \mapsto O_1 \times O_2$ given by $\chi(\dot{\mathbf{x}}) = \left(T'_1(\dot{\mathbf{x}}), T'_2(\dot{\mathbf{x}})\right)$. As discussed previously, $O_1 \times O_2$ is a $(4n-6)-$dimensional Euclidean space, and by definition, each of the $4n-6$ coordinates in the image of $\chi(\dot{\mathbf{x}})$ is either one of the values of $\dot{\mathbf{x}}$ (the value corresponding to the consequential edge) or fixed to zero. Thus, $\chi$ is a linear mapping between Euclidean spaces. And since the function $d^2: O_1 \times O_2 \mapsto  \mathbb{R}_{\geq 0}$ is continuous and convex, then $\delta = d^2 \circ \chi$ is also continuous and convex  \citep[Page 79]{alma99124373890001452}. 

    Note each variable $\dot{x}^{j}_{p}$ in \eqref{equation:ObjectiveFunction} contributes to exactly one quadratic term. Thus, the gradient of $\delta: \mathbb{R}_{\geq 0}^{\mathbf{r_1} + \mathbf{r_2}} \mapsto \mathbb{R}_{\geq 0}$ has entries  
    \begin{equation}
    \label{equation:PartialDerivativesExpression}
        \frac{\partial \delta(\dot{\mathbf{x}})}{\partial \dot{x}^{j}_{p}} = \begin{cases}
            2\dot{x}^{j}_{p}\left(1 + \frac{||\dot{B_{i}}||}{||\dot{A_{i}}||}\right) & j=1, p \in \dot{A}_{i}\\
            2\left(\dot{x}^1_{p} -\dot{x}^2_{p}\right) & j = 1, p \in \dot{K} \\
            2\dot{x}^{j}_{p}\left( 1 + \frac{||\dot{A_{i}}||}{||\dot{B_{i}}||} \right) & j=2, p \in \dot{B}_{i}\\
             2\left(\dot{x}^2_{p} -\dot{x}^1_{p}\right) & j=2, p \in \dot{K}.\\
        \end{cases}
    \end{equation}
    Since we are using the unique minimal support of the geodesic between trees $T'_1(\dot{\mathbf{x}})$ and $T'_2(\dot{\mathbf{x}})$ in \eqref{equation:ObjectiveFunction} and \eqref{equation:PartialDerivativesExpression}, and these trees are uniquely and well-defined by $\dot{\mathbf{x}}$, the partial derivatives given by \eqref{equation:PartialDerivativesExpression} are well-defined as long as $||\dot{A_i}||, ||\dot{B_i}|| \neq 0$, which will be the case within the domain's interior. Other support $(\mathcal{A}',\mathcal{B}')$ for the geodesic from $T'_1(\dot{\mathbf{x}})$ to $T'_2(\dot{\mathbf{x}})$ will hold the property 
    $$\frac{||\dot{B'_{l}}||}{||\dot{A'_{l}}||} = \frac{||B'_{l}||_{T'_2(\dot{\mathbf{x}})}}{||A'_{l}||_{T'_1(\dot{\mathbf{x}})}} =  \frac{||B_{i}||_{T'_2(\dot{\mathbf{x}})}}{||A_{i}||_{T'_1(\dot{\mathbf{x}})}} = \frac{||\dot{B_{i}}||}{||\dot{A_{i}}||} \text{ when } p \in A_{i} \text{ and } p \in A'_{l}, $$ 
    $$\frac{||\dot{A'_{l}}||}{||\dot{B'_{l}}||} = \frac{||A'_{l}||_{T'_1(\dot{\mathbf{x}})}}{||B'_{l}||_{T'_2(\dot{\mathbf{x}})}} =  \frac{||A_{i}||_{T'_1(\dot{\mathbf{x}})}}{||B_{i}||_{T'_2(\dot{\mathbf{x}})}} = \frac{||\dot{A_{i}}||}{||\dot{B_{i}}||} \text{ when } p \in B_{i} \text{ and } p \in B'_{l}, $$
    which means the partial derivative with respect to any $\dot{x}^{i}_{p}$ would be equal under equivalent supports, which implies it is unambiguously defined. Since the map $\dot{\mathbf{x}} \mapsto \frac{||A||_{T'_1(\dot{\mathbf{x}})}}{||B||_{T'_2(\dot{\mathbf{x}})}}$ for any nonempty subsets $A \subseteq \mathcal{S}(T'_1)$ and $B \subseteq \mathcal{S}(T'_2)$ is continuous in the interior of the domain, the partial derivatives are continuous as well. 
\end{proof}



The continuous differentiability of $\delta$ extends beyond the interior of the domain. The gradient remains continuous at boundary points of the domain 
where \eqref{equation:PartialDerivativesExpression} is well-defined, i.e., where $||\dot{A_i}|| \neq 0 $ and $ ||\dot{B_i}|| \neq 0$. However, the gradient does not exist at points where $\dot{x}^1_{p} = 0$ for all $p \in \dot{A}_{i}$ or $\dot{x}^2_{p} = 0$ for all $p\in \dot{B}_{i}$. 
In these cases we can replace the non-existent gradient with a subgradient  \citep[Definition 2.72]{NLoptimization} without impacting the convergence of the algorithm. 
Specifically, in place of $\nabla \varphi(\dot{\mathbf{x}}^{k}_{\mathbf{F}^{k}})$ in the pausing condition $\nabla \varphi(\dot{\mathbf{x}}^{k}_{\mathbf{F}^{k}}) = 0$ (Step 7, Section \ref{sec:ReduxGDM}), we use a subgradient, 
replacing 
every undefined gradient entry $\frac{\partial \delta(\dot{\mathbf{x}})}{\partial \dot{x}^{j}_{p}}$ with zero. 
Note that the new (sub)gradient vector will equal zero when zero belongs to the set of subgradients $\partial \delta(\dot{\mathbf{x}}^{k})$, and that a sufficient condition for attaining a minimum of a convex function is for zero to belong to this set  \citep[Theorem 3.5]{NLoptimization}. The optimality condition for the global minimum (Step 8, Section \ref{sec:ReduxGDM})
can be replaced by an equivalent condition in which we require the existence of $\eta \in \partial \delta(\mathbf{x}^{k})$ 
such that 
$\eta_{\mathbf{N}^{k}} - \dot{\mathbf{M}}_{ \mathbf{N}^{k}}^{\top} \left\{\dot{\mathbf{M}}_{ \mathbf{D}^{k}}^{\top}\right\}^{-1}  \eta_{\mathbf{D}^{k}} \geq 0$  \citep[Theorem 3.34]{NLoptimization}.

\subsection{Algorithm for distances between extension spaces}
\label{subSect:AlgorithmFinal}

We are now able to describe our algorithm to solve \eqref{eq:distanceDefinition} and find an optimal pair. We begin with a result on using \ref{alg:OrthantExtensionDistance} to identify the closest trees within \textit{orthant-specific} extension spaces. 

\begin{theorem}
\label{Theorem1}
    Algorithm \ref{alg:OrthantExtensionDistance} converges to trees $(T^{*}_1,T^{*}_2) \in (E^{O_1}_{T_1}, E^{O_2}_{T_2})$ such that $$d(T^{*}_1,T^{*}_2) = \inf_{(t_1,t_2) \in (E^{O_1}_{T_1}, E^{O_2}_{T_2})} d(t_1, t_2).$$
\end{theorem}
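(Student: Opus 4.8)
The plan is to verify that Algorithm~\ref{alg:OrthantExtensionDistance} is an instance of the reduced gradient method of Section~\ref{sec:ReduxGDM} applied to the problem \eqref{eq:RealMinimizingProblem}, and then invoke the convergence theorem \cite[Theorem 6.4]{NLoptimization}. First I would observe that, by Lemma~\ref{lemma:EquivalentMinimizingProblem}, solving \eqref{eq:RealMinimizingProblem} is equivalent to finding the closest tree pair in $\lceil E^{O_1}_{T_1}\times E^{O_2}_{T_2}\rceil$, which by Lemma~\ref{Lemma:distanceInconsequentialBranches} has the same infimum distance as $E^{O_1}_{T_1}\times E^{O_2}_{T_2}$. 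So it suffices to show the algorithm converges to a global minimizer $\dot{\mathbf{x}}^{*}$ of \eqref{eq:RealMinimizingProblem} and then set $(T^{*}_1,T^{*}_2) = (T'_1(\dot{\mathbf{x}}^{*}), T'_2(\dot{\mathbf{x}}^{*}))$.

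Next I would check the three hypotheses needed for \cite[Theorem 6.4]{NLoptimization}: (i) the objective $\delta$ is convex and continuously differentiable on the relevant region — this is Lemma~\ref{lemma:ObjFunctionContConv}, together with the subgradient substitution described at the end of Section~\ref{subSect:GradientsBHV} to handle the boundary points where some $\|\dot A_i\|$ or $\|\dot B_i\|$ vanishes and the gradient fails to exist; (ii) the feasible set $\{\dot{\mathbf{x}}\ge 0 : \dot{\mathbf{M}}\dot{\mathbf{x}}=\dot{\mathbf{v}}\}$ is convex and compact — convexity is immediate since it is a polyhedron, and compactness follows from Lemma~\ref{lemma:convexityInOrthant} via the bijection $\chi$ of Lemma~\ref{lemma:EquivalentMinimizingProblem} (the feasible set is the continuous preimage, indeed linear image description, of the bounded set $\lceil E^{O_1}_{T_1}\times E^{O_2}_{T_2}\rceil$); and (iii) $\dot{\mathbf{M}}$ has full row rank, which holds because each $\ddot M^{i}$ is the consequential-column submatrix of $M^{O_i}_{\mathcal{L}_i}$ and hence has rank $2l_i-3$ (as established right before the Example), so $\dot{\mathbf{M}}$ is block-diagonal with full-rank blocks. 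I would also verify the initialization in the algorithm is valid: the chosen $\dot{\mathbf{x}}^0$ satisfies $\dot{\mathbf{M}}\dot{\mathbf{x}}^0=\dot{\mathbf{v}}$ and is strictly positive on the consequential coordinates, and the chosen $\mathbf{D}^0$ (one "leading" column per row) makes $\dot{\mathbf{M}}_{\mathbf{D}^0}$ square of full rank.

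Then I would walk through the correspondence between the lines of Algorithm~\ref{alg:OrthantExtensionDistance} and Steps~1--8 of Section~\ref{sec:ReduxGDM}: the gradient-of-$\varphi$ formula matches Step~2; the conjugate-gradient update for $\mathbf{d}^t_{\mathbf{F}}$ with the periodic restart ($c_{\text{conj}}$ reset) is the unconstrained search direction of Step~3(i); $\mathbf{d}^t_{\mathbf{D}}=-\dot{\mathbf{M}}_{\mathbf{D}}^{-1}\dot{\mathbf{M}}_{\mathbf{F}}\mathbf{d}^t_{\mathbf{F}}$ and $\mathbf{d}^t_{\mathbf{N}}=0$ are Steps~3(ii)--(iii); the $\tau_{\max}$ / bisection line search realizes Step~4; the index-reclassification block is Step~6 (with the column-swap preserving rank justified by \cite[Lemma 6.2]{NLoptimization}); and the termination test $\|\nabla\varphi\|_\infty<\texttt{Tol1}$ followed by the sign check on $\overline{g}_{\mathbf{N}}$ is Steps~7--8, where the optimality certificate uses \cite[Lemma 6.3]{NLoptimization} and, at nondifferentiable points, its subgradient analogue \cite[Theorem 3.34]{NLoptimization}. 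Having matched everything, \cite[Theorem 6.4]{NLoptimization} gives that only finitely many stationary points are visited and the iterates converge to the global minimizer; translating back through $\chi$ yields the claimed $(T^{*}_1,T^{*}_2)$ with $d(T^{*}_1,T^{*}_2)=\inf_{(t_1,t_2)\in E^{O_1}_{T_1}\times E^{O_2}_{T_2}} d(t_1,t_2)$.

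The main obstacle I anticipate is the nonsmoothness of $\delta$: the geodesic-length formula \eqref{eq:FirstDistance} is only piecewise smooth, the support $(\mathcal{A},\mathcal{B})$ changes as $\dot{\mathbf{x}}$ varies, and at tree pairs where an entire block $\dot A_i$ or $\dot B_i$ collapses to zero the gradient genuinely does not exist. I would need to argue carefully that (a) away from such points the partial derivatives are well-defined and continuous independently of which minimal support is chosen — the ratio-invariance identities in the proof of Lemma~\ref{lemma:ObjFunctionContConv} handle this — and (b) replacing the missing gradient entries by zero yields a valid subgradient, so that the pausing condition and the global-optimality certificate remain sound (by \cite[Theorem 3.5]{NLoptimization} and \cite[Theorem 3.34]{NLoptimization}) and the finite-termination argument of \cite[Theorem 6.4]{NLoptimization} still applies. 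A secondary point requiring care is confirming that the feasible region is genuinely compact rather than merely closed, which is exactly why Lemma~\ref{lemma:convexityInOrthant} (boundedness of the mutually restricted extension space) is needed rather than working directly with $E^{O_1}_{T_1}\times E^{O_2}_{T_2}$, which is unbounded in the inconsequential coordinates.
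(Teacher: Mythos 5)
Your proposal is correct and follows essentially the same route as the paper's proof: it invokes the convexity/continuity of $\delta$ (Lemma~\ref{lemma:ObjFunctionContConv}), transfers compactness of $\lceil E^{O_1}_{T_1}\times E^{O_2}_{T_2}\rceil$ (Lemma~\ref{lemma:convexityInOrthant}) to the feasible set via the bijection $\chi$, applies the reduced gradient convergence theorem of \cite{NLoptimization}, and concludes with Lemma~\ref{Lemma:distanceInconsequentialBranches}. The additional checks you flag (full row rank, valid initialization, and the subgradient substitution at nonsmooth points) are handled by the paper outside the proof itself, so your write-up is simply a more explicit version of the same argument.
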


\begin{proof}
    Algorithm \ref{alg:OrthantExtensionDistance} is a reduced gradient method to minimize $\delta: \mathbb{R}_{\geq 0}^{\mathbf{r_1} + \mathbf{r_2}} \mapsto \mathbb{R}_{\geq 0}$ 
    subject to constraints $\dot{\mathbf{M}} \dot{\mathbf{x}} = \dot{\mathbf{v}}$ and $\dot{\mathbf{x}}\geq 0$. 
    This function is continuous and convex (Lemma \ref{lemma:ObjFunctionContConv}). 
    Consider the feasible set $\mathbf{U} = \left\{\dot{\mathbf{x}} \geq 0 | \dot{\mathbf{M}} \dot{\mathbf{x}} = \dot{\mathbf{v}} \right\}$. Given that the function $\chi: \mathbf{U} \mapsto \lceil{E^{O_1}_{T_1} \times E^{O_2}_{T_2}}\rceil$ that maps $\chi(\dot{\mathbf{x}}) = (T'_1(\dot{\mathbf{x}}), T'_2(\dot{\mathbf{x}}))$ is a continuous bijective linear map, and  $\lceil{E^{O_1}_{T_1} \times E^{O_2}_{T_2}}\rceil$ is a convex, closed and bounded set, then $\mathbf{U}$ is as well. It follows that the algorithm converges to a point minimizing $\delta$ inside the feasible set (\cite[Theorem 6.4]{NLoptimization}). 
    Because of the bijection between $\mathbf{U}$ and $\lceil{E^{O_1}_{T_1} \times E^{O_2}_{T_2}}\rceil$, we conclude that the algorithm's solution $\dot{\mathbf{x}}^{*}$ 
    minimizes $d^2$ (and by extension $d$) on $\lceil{E^{O_1}_{T_1} \times E^{O_2}_{T_2}}\rceil$. The result follows from Lemma \ref{Lemma:distanceInconsequentialBranches}.
\end{proof}

\begin{theorem}
\label{Theorem2}
    For each orthant pair $(O_1, O_2) \in C^\mathcal{N}_{T_1} \times C^\mathcal{N}_{T_2}$, apply Algorithm \ref{alg:OrthantExtensionDistance} to construct a candidate pair $(T^{*}_1,T^{*}_2)$ for the optimal pair. Among all these candidate pairs, the one with the minimum distance will be a solution to \eqref{eq:distanceDefinition}.
\end{theorem}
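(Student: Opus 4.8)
The plan is to reduce the global minimization over the full extension spaces to the finitely many orthant-specific subproblems already solved by Theorem \ref{Theorem1}. The structural fact I would rely on is that the connection clusters $C_{T_1}^{\mathcal{N}}$ and $C_{T_2}^{\mathcal{N}}$ are \emph{finite} --- each consists of maximum-dimensional orthants of $\mathcal{T}^{\mathcal{N}}$, of which there are only finitely many --- and that, as recorded in Section \ref{sec:ExtensionSpace}, each extension space is the finite union $E_{T_i}^{\mathcal{N}} = \bigcup_{O \in C_{T_i}^{\mathcal{N}}} E_{T_i}^{O}$. Taking products then gives
\[
E_{T_1}^{\mathcal{N}} \times E_{T_2}^{\mathcal{N}} \;=\; \bigcup_{(O_1, O_2) \in C_{T_1}^{\mathcal{N}} \times C_{T_2}^{\mathcal{N}}} E_{T_1}^{O_1} \times E_{T_2}^{O_2},
\]
again a finite union.

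First I would split the infimum in \eqref{eq:distanceDefinition} along this decomposition: since the infimum of a function over a finite union of sets equals the minimum over those sets of the infimum over each,
\[
d\!\left(E_{T_1}^{\mathcal{N}}, E_{T_2}^{\mathcal{N}}\right) \;=\; \min_{(O_1,O_2) \in C_{T_1}^{\mathcal{N}} \times C_{T_2}^{\mathcal{N}}}\ \inf_{(t_1,t_2) \in E_{T_1}^{O_1} \times E_{T_2}^{O_2}} d(t_1, t_2).
\]
Next I would invoke Theorem \ref{Theorem1}: for each orthant pair $(O_1, O_2)$, Algorithm \ref{alg:OrthantExtensionDistance} converges to a pair $(T^{*}_1,T^{*}_2) \in E_{T_1}^{O_1} \times E_{T_2}^{O_2}$ realizing precisely the inner infimum $\inf_{(t_1,t_2) \in E_{T_1}^{O_1} \times E_{T_2}^{O_2}} d(t_1,t_2)$. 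Hence every inner infimum is attained by an explicit candidate pair, and the right-hand side above is a minimum over a finite list of attained values.

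Finally, choosing the candidate pair with smallest distance among the finitely many produced --- one per orthant pair --- yields a pair in $E_{T_1}^{\mathcal{N}} \times E_{T_2}^{\mathcal{N}}$ whose distance equals $\min_{(O_1,O_2)} \inf_{E_{T_1}^{O_1} \times E_{T_2}^{O_2}} d$, which by the displayed identity equals $d(E_{T_1}^{\mathcal{N}}, E_{T_2}^{\mathcal{N}})$; so it is an optimal pair and a solution to \eqref{eq:distanceDefinition}, and as a byproduct an optimal pair is seen to exist. I do not expect a substantial obstacle here --- the argument is essentially bookkeeping --- but the point that warrants care is the claim that the finite union $E_{T_i}^{\mathcal{N}} = \bigcup_{O} E_{T_i}^{O}$ genuinely exhausts the extension space, including trees on lower-dimensional faces; this is exactly what is established in Section \ref{sec:ExtensionSpace}, so I would cite it rather than reprove it. A second, minor point is that the passage from $\inf$ to $\min$ over orthant pairs uses only finiteness of the index set --- not compactness --- so no hypotheses beyond those of Theorem \ref{Theorem1} are required.
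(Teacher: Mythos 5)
Your proposal is correct and follows essentially the same route as the paper: both rest on the finite-union decomposition $E_{T_i}^{\mathcal{N}} = \bigcup_{O \in C_{T_i}^{\mathcal{N}}} E_{T_i}^{O}$ so that every pair in $E_{T_1}^{\mathcal{N}} \times E_{T_2}^{\mathcal{N}}$ lies in some orthant-specific product, then invoke Theorem \ref{Theorem1} per orthant pair and take the minimum over the finitely many candidates. The paper phrases the reduction as a direct inequality chain rather than an explicit ``inf over a finite union equals min of infs'' identity, but the content is identical.
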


\begin{proof}
Since the number of orthants in each connection cluster is finite, we can list every orthant in it $C^\mathcal{N}_{T_i} = \{O_{i}^{1},\hdots, O_i^{c_i}\}$. Denote by $(T^{*}_{1,j_1},T^{*}_{2,j_2})$ the tree pair obtained by applying Algorithm \ref{alg:OrthantExtensionDistance} to $(O_{1}^{j_1}, O_{2}^{j_2})$, $j_1 = 1, \ldots, c_1$, $j_2 = 1, \ldots, c_2$. $\{d(T^{*}_{1,j_1},T^{*}_{2,j_2})\}$ is a finite set, and we define $(T^{*}_{1},T^{*}_{2})$ to be the pair achieving its minimum. 
For any $(T'_1, T'_2) \in E_{T_1}^{\mathcal{N}}\times E_{T_2}^{\mathcal{N}}$, 
$\mathcal{O}(T_1) \subseteq O_1^{j_1}$ and $\mathcal{O}(T_2) \subseteq O_1^{j_2}$ for some $j_1$ and $j_2$. Therefore, $d(T^{*}_{1},T^{*}_{2}) \leq d(T^{*}_{1,j_1},T^{*}_{2,j_2}) \leq d(T'_1, T'_2)$
\end{proof}

\section{Algorithmic complexity and runtime}
\label{sec:Experiments}

Having described our method in Theorem \ref{Theorem2}, we now turn our attention to studying its performance 
as the shared and total number of leaves varies. 
We implemented our algorithm in Java (version 20.0.2). Our implementation is available as part of the \texttt{BHVExtMinDistance} library, which can be accessed freely from the \texttt{ExtnSpaces} repository at \url{https://github.com/statdivlab/ExtnSpaces.git}.
Our implementation depends on the 
\texttt{distanceAlg1} and \texttt{polyAlg} libraries, available at 
the \texttt{BHVtreespace} github repository: \url{https://github.com/megan-owen/BHVtreespace.git}. 
Code and instructions to reproduce the following two sections' analysis are available at \url{https://github.com/statdivlab/ExtnSpaces_supplementary.git}.
To our knowledge, no other algorithms exist to find distances between extension spaces, and therefore, there are no methods to benchmark against. 

Since our algorithm performs an optimization routine for each orthant pair, the total number of orthant pairs is a major driver of the complexity of our algorithm. The number of orthants in the extension space in $\mathcal{T}^{\mathcal{N}}$ ($|\mathcal{N}| = n$) of a tree with $|\mathcal{L}_i| = l_i$ leaves is $(2n - 5)!! / (2l_i - 5)!!$ 
 \citep[Theorem 2.1]{ren2017combinatorial}, and therefore, the number of orthant pairs we must consider is $\Omega := \{(2n - 5)!!\}^2/\{(2l_1 - 5)!! \times (2l_2 - 5)!!\}$. This value has the potential to be considerably large, since the growth rate of the value $(2n-5)!!$ is super-exponential. Using Stirling's approximation (for large $l_1, l_2$ and $n$) we obtain 
$$\Omega \sim \left(2e^{-1}\right)^{2n-l_1-l_2} (n-2)^{2n-4}(l_1-2)^{2-l_1}(l_2-2)^{2-l_2}.$$
and therefore, $\Omega = O(n^{2n - (l_1 + l_2)})$ when $n - l_1$ and $n - l_2$ are constant.

\begin{table}[htbp]
    \centering
    \caption{The runtime of Algorithm \ref{alg:OrthantExtensionDistance} in practice. For each setting $\mathcal{S}$, we report the number of pairs of orthants to search over ($\Omega$); the total runtime for computing distances between $E^{\mathcal{N}}_{T_1}$ and $E^{\mathcal{N}}_{T_2}$ (min:sec); the number of iterations for each reduced gradient method to converge (mean [median, 90\% quantile and maximum]); the distance between $E^{\mathcal{N}}_{T_1}$ and $E^{\mathcal{N}}_{T_2}$; and the number of optimal pairs (``$\#$ pairs''). We observe that the number of orthant pairs is the largest factor contributing to runtime. 
    } 
    \label{tab:simulation_study}
    \resizebox{\textwidth}{!}{%
    \begin{tabular}{|c|c|c|c|c|r|c|r|r|}
        \hline
        $\mathcal{S}$ & $|\mathcal{L}_1 \cup \mathcal{L}_2|$ & $|\mathcal{L}_1|$ & $|\mathcal{L}_2|$ & $\Omega$ & Runtime & Iterations & $d(E_{T_1}^{\mathcal{N}}, E_{T_2}^{\mathcal{N}})$ & $\#$ pairs \\
        \hline
        \multicolumn{9}{|c|}{Unimodal distribution for Edge Lengths}\\
        \hline
        a & 7 & 6 & 4 & 2835 & 00:05 & 5.50 [5, 7, 174] & 6.675 & 1 \\
        \hline
        b & 7 & 6 & 4 & 2835 & 00:03 & 4.94 [5, 7, 22] & 4.378 & 1 \\
        \hline
        c & 7 & 5 & 4 & 19845 & 00:34 & 6.41 [6, 10, 64] & 0.268 & 1 \\
        \hline
        d & 10 & 9 & 8 & 2925 & 00:10 & 3.96 [4, 6, 17] & 18.497 & 1 \\
        \hline
        e & 10 & 8 & 8 & 38025 & 02:27 & 4.98 [5, 8, 46] & 15.710 & 1 \\
        \hline
        f & 10  & 8 & 7 & 418275 & 50:17 & 7.33 [6, 12, 68] & 9.459 & 1 \\
        \hline
        \multicolumn{9}{|c|}{Bimodal distribution for Edge Lengths}\\
        \hline
        a & 7 & 6 & 4 & 2835 & 00:04 & 4.56 [4, 7, 28] & 108.284 & 6 \\
        \hline
        b & 7 & 6 & 4 & 2835 & 00:02 & 4.60 [5, 6, 18] & 108.667 & 1 \\
        \hline
        c & 7 & 5 & 4 & 19845 & 00:29 & 6.00 [6, 9, 36] & 24.880 & 1 \\
        \hline
        d & 10 & 9 & 8 & 2925 & 00:08 & 4.03 [4, 6, 23] & 132.690 & 1 \\
        \hline
        e & 10 & 8 & 8 & 38025 & 02:27 & 4.97 [5, 8, 40] & 104.722 & 2 \\
        \hline
        f & 10 & 8 & 7 & 418275 & 47:45 & 7.53 [6, 12, 91] & 123.323 & 2\\
        \hline
    \end{tabular}%
    }
\end{table}

We study the in-practice scalability and performance of our algorithm using simulated phylogenetic trees. We selected 6 pairs of topologies 
(available at \url{https://github.com/statdivlab/ExtnSpaces_supplementary.git})
each with a different combination of $\mathcal{L}_1 \cup \mathcal{L}_2$, $\mathcal{L}_1$ and $\mathcal{L}_2$, and considered $\mathcal{N} = \mathcal{L}_1 \cup \mathcal{L}_2$. 
We considered two distributions for the edge lengths, resulting in 12 total simulation settings. 
The first edge length distribution is a lognormal distribution (mean $=5$, variance $=1$), reflecting a low-variance edge length scenario.
The second distribution is a mixture of two lognormal distributions. The first component has a mean of 5 and a variance of 1 (sampled with 75\% probability), and the second component has a mean of 60 and a variance of 10 (sampled with 25\% probability). This second, high-variance distribution reflects long-branch scenarios that commonly arise in practice. 
 
\label{subsec:Results1}

The results of our exploration can be found in Table 1. 
Running clock-times are based on an 8-core Apple M1 processor with 16GB of RAM. These times reflect the process of computing the distances for each orthant pair sequentially, and applying a merge sort algorithm. Our library supports multi-threading, allowing distances in different orthant pairs to be computed concurrently to reduce run-times, but we report single-thread times here for transparency. For this simulation, we selected a tolerance for $\nabla \varphi(\dot{\mathbf{x}}_{\mathbf{F}})$ of $10^{-8}$. 

As our algorithm performs an optimization process per each orthant pair, we expected the runtime to be approximately proportionally to the number of orthant pairs for a fixed search space dimension, which we generally find to be the case. For example, when the number of orthant pairs increased by a factor of 7 (from 2835 pairs in (a) and (b) to 19845 in (c)), the increase in runtime was approximately 7-fold, from 2-5 seconds to around 29-34 seconds. 
Similarly, when the number of orthant pairs increased by 11-fold (from (e) to (f)), runtimes increased by $\sim 11$-fold, and 
when the number of orthant pairs increased by 13-fold (from (d) to (e)), runtimes increased by $\sim 16$-fold. 


The number of orthant pairs appears to affect the runtime through two avenues: directly (via the number of reduced gradient methods to be performed), and indirectly (due to an increase in the average number of iterations required for convergence). The number of iterations to convergence is directly influenced by the number of leaves being attached to each tree to create their extension space, as each reduced gradient problem involves the linear constraints $\dot{\mathbf{M}} \dot{\mathbf{x}} = \dot{\mathbf{v}}$ with $(r_1+r_2) - \{2(l_1+l_2)-6\}$ (the difference between the number of consequential edges and the number of original edges in $T_1$ and $T_2$) degrees of freedom, which is upper-bounded by the number of leaves being added to the trees. Consequentially, the number of free variables in each iteration of our optimization mechanism is at most $2n - l_1 - l_2$. Convergence will take longer when this number is higher.

Unsurprisingly, we find that distances between extension spaces tend to be higher when edge lengths are heavy-tailed (bimodal distribution). Interestingly, while in all unimodal distribution cases, only one optimal pair was found, it was common to find more than one optimum under the bimodal distribution. Both of these observations can be explained by how BHV distances depend on edge lengths. 
As described in Section \ref{sec:OwenProvan}, when going from a tree $t_1$ to $t_2$, the common edges are present in the topologies of all trees on the geodesic, while uncommon edges present in $t_1$ are gradually swapped for uncommon edges in $t_2$. Intuitively, the size of the uncommon edges in $t_1$ and $t_2$ gradually change between zero and their original size. Similarly, the lengths of common edges gradually change from their lengths in $t_1$ to their lengths in $t_2$. Thus, BHV distances are increased by longer uncommon edges, and by common edges with significantly different lengths. The bimodal distribution allows for longer uncommon edges, and introduces more variability in the sizes of the common edges, which explains why the distances are higher. 

Another effect of edge lengths on BHV distances is that, in practice, if an edge in a tree is decidedly longer than the others, topology orthants in the connection cluster that involve attaching new edges to this edge tend to produce shorter geodesics. 
If this particularly long edge is such that edges mapping into it (under the tree dimensionality reduction map) are uncommon edges, then the large value of the length must reduce to zero at some point along the geodesic. Thus, dividing this long edge into several smaller edges through attaching edges into it will reduce the size of the geodesic. 
This also explains why more than one optimal pair was found in some of the cases where edge lengths were assigned through the bimodal distribution. If one of the edges in one of the trees is long, then the best candidates for the optimal pair arise from those attaching leaves along that edge, and the same geodesic length can be achieved by attaching the edges in the same places along the long edge, but in a different order. For example, $T_1$ (case (a), bimodal) has a long external edge incident to the leaf \textbf{L06}, and all 6 optimal pairs are such that $T_2^*$ has edges \textbf{L03}, \textbf{L04} and \textbf{L05} attached to that edge. Thus, we obtain 6 optimal pairs because there 6 different ways we can order these three external edges across the long edge.

\section{Application to prokaryotic gene trees}
\label{sec:RealDataAnalysis}

Here, we illustrate our method on gene trees spanning phylogenetically diverse prokaryotic lineages. 
Prokaryotes (bacteria and archaea) display a high degree of discordance in the genes they carry, with fewer than $\sim$1\% of a given organism's genes distributed ``universally'' across all bacteria \citep{dagan2006tree}. Thus, the comparison of two prokaryotic gene phylogenies will almost always require tools that can handle non-identical leaf sets, motivating the development of our method. 


We analyze gene trees from \cite{zhu2019phylogenomics}, focusing on two genes involved in essential tasks: cell division and repair. Specifically, we consider $T_1$ to be the gene tree for \textit{ftsA} (coding for a protein involved in cell division) and $T_2$ to be the gene tree for \textit{dinB} (coding for a DNA polymerase protein involved in translesion repair). We restrict our analysis to 10 phylogenetically diverse organisms spanning 2 domains of life; our complete leaf set $\mathcal{N}$ is given in Table 2. These organisms are found in diverse habitats, including the human gut, oral cavity, and tumors; as well as groundwater, treated water, and deep-sea hydrothermal vents. Out of ten total organisms, only five organisms have both genes, with 3 and 2 unshared genes carried by \textit{ftsA} and \textit{dinB}, respectively. Note that these genes could be truly absent, or they could be unobserved due to imperfections in genome reconstruction from metagenomes \citep{duarte2020sequencing,zaheer2018impact,royalty2019theoretical}.

\begin{table}[htbp]
    \centering
    \label{tab:Leaves}
    \caption{The complete leaf set $\mathcal{N}$ for the \textit{ftsA} and \textit{dinB} gene trees. }
\begin{tabular}{|l|l|c|c|}
\hline
Species                                  & Domain & \textit{ftsA} tree & \textit{dinB} tree \\ \hline
\textit{Actinomyces odontolyticus}                & Bacteria     & No                   & Yes                  \\
\textit{Fusobacterium nucleatum}                  & Bacteria     & Yes                  & Yes                  \\
\textit{Pseudomonas pelagia}                      & Bacteria     & Yes                  & Yes                  \\
\textit{Bacteroides fragilis}                     & Bacteria     & Yes                  & Yes                  \\
\textit{Candidatus Saccharibacteria TM7x}              & Bacteria     & Yes                  & No                   \\
\textit{Sphingomonas hengshuiensis}               & Bacteria     & Yes                  & Yes                  \\
\textit{Parcubacteria SG8-24}                  & Bacteria     & Yes                  & No                   \\
\textit{Vibrio scophthalmi}                       & Bacteria     & Yes                  & Yes                  \\
\textit{Candidatus Lokiarchaeota CR4}  & Archea       & No                   & Yes                  \\
\textit{Candidatus Odinarchaeota LCB4} & Archea       & No                   & Yes                  \\ \hline
\end{tabular}
\end{table}

$T_1$ and $T_2$ are shown in Fig.~4(a). While $E_{T_1}^\mathcal{N}$ spans 2145 orthants, and $E_{T_2}^\mathcal{N}$ spans 195 orthants, none of these orthants are shared between the two extension spaces. As a result, the compatibility measures of \cite{GrindstaffGillian2019RoPL} are not defined for these two trees. 
In contrast, our distance $d(E_{T_1}^\mathcal{N}, E_{T_2}^\mathcal{N})$ is always defined. We applied Algorithm \ref{alg:OrthantExtensionDistance} to every orthant pair in $C^\mathcal{N}_{T_1} \times C^\mathcal{N}_{T_2}$ and found that the distance between the extension spaces is 4.234, and that this value was attained as the distance between the trees in $\mathcal{T}^\mathcal{N}$ shown in Fig.~4(b). To search through 418275 pairs of orthants in $\mathcal{T}^\mathcal{N}$ took 21 minutes on a 8-core Apple M1 processor with 16GB of RAM in multi-threaded setting with a thread pool of size 8. 

In addition, our approach to computing distances between trees with non-identical leaf sets naturally lends itself to a ``supertree'' method for combining information from two trees. Specifically, because the minimum distance between extension spaces is the size of geodesic paths in $\mathcal{T}^{\mathcal{N}}$ the tree in $\mathcal{T}^{\mathcal{N}}$ that is the midpoint on such a geodesic averages the information across the true trees \citep{brown2020mean, MillerOwenProvan}. However, because minimum distance paths between extension spaces are not necessarily unique, this supertree measure may not be unique. This is in contrast with Fr\'echet means of trees that are all contained within the same BHV space, whose Fr\'echet means are uniquely defined \citep{sturm}. 

Two paths were minimum distance between the extension spaces of the \textit{ftsA} and \textit{dinB} trees. Similar to the cases discussed in Section \ref{subsec:Results1}, the two trees ($T_A$ and $T_{A'}$ in Fig.~4(b)) in the extension space for \textit{ftsA} are both produced by attaching new edges (corresponding to external edges to \textit{Ca. Lokiarchaeota CR4} and \textit{Ca. Odinarchaeota LCB4}) to a particularly long edge (the external edge to \textit{S. hengshuiensis}) in the same locations but in a different order --- a phenomenon discussed in Section \ref{sec:Experiments}. The internal edges resulting from attaching these edges (with lengths 0.72 and 1.30) are in both cases uncommon to the tree $T_{B} \in E_{T_2}^\mathcal{N}$, and thus these edges reduce to zero in length and are then dropped. Because these edges have the same lengths in $T_A$ and $T_{A'}$ and all other edges are equal length, the tree along the geodesics where the last of these two edges (the edge separating \textit{Ca. Lokiarchaeota CR4}, \textit{Ca. Odinarchaeota LCB4} and \textit{S. hengshuiensis} from all other organisms) is dropped is the same, and then the geodesic follows the same path to $T_B$. 
Although theoretically the existence of two optimal pairs could admit two supertrees, in this case the length of the geodesic section from the starting tree (either $T_A$ or $T'_A$) to the tree where both geodesics coincide is less than half the length of the total geodesic. Because of this, the mid-point is unchanged between $T_A$ and $T'_A$. This unique supertree is shown in Fig.~4(c). If the length of the section where both geodesics do not coincide were longer than half the length of the geodesics, then the midpoints would differ from each other. Nevertheless they would still share many edges in common and with the same length, and uncommon edges would have a counterpart with the same length. Thus, we expect non-unique supertrees to be similar in general.


\begin{figure}
    \label{fig:AllTrees304317}
    \centering
    \subfigure[]{\includegraphics[width=0.4\textwidth]{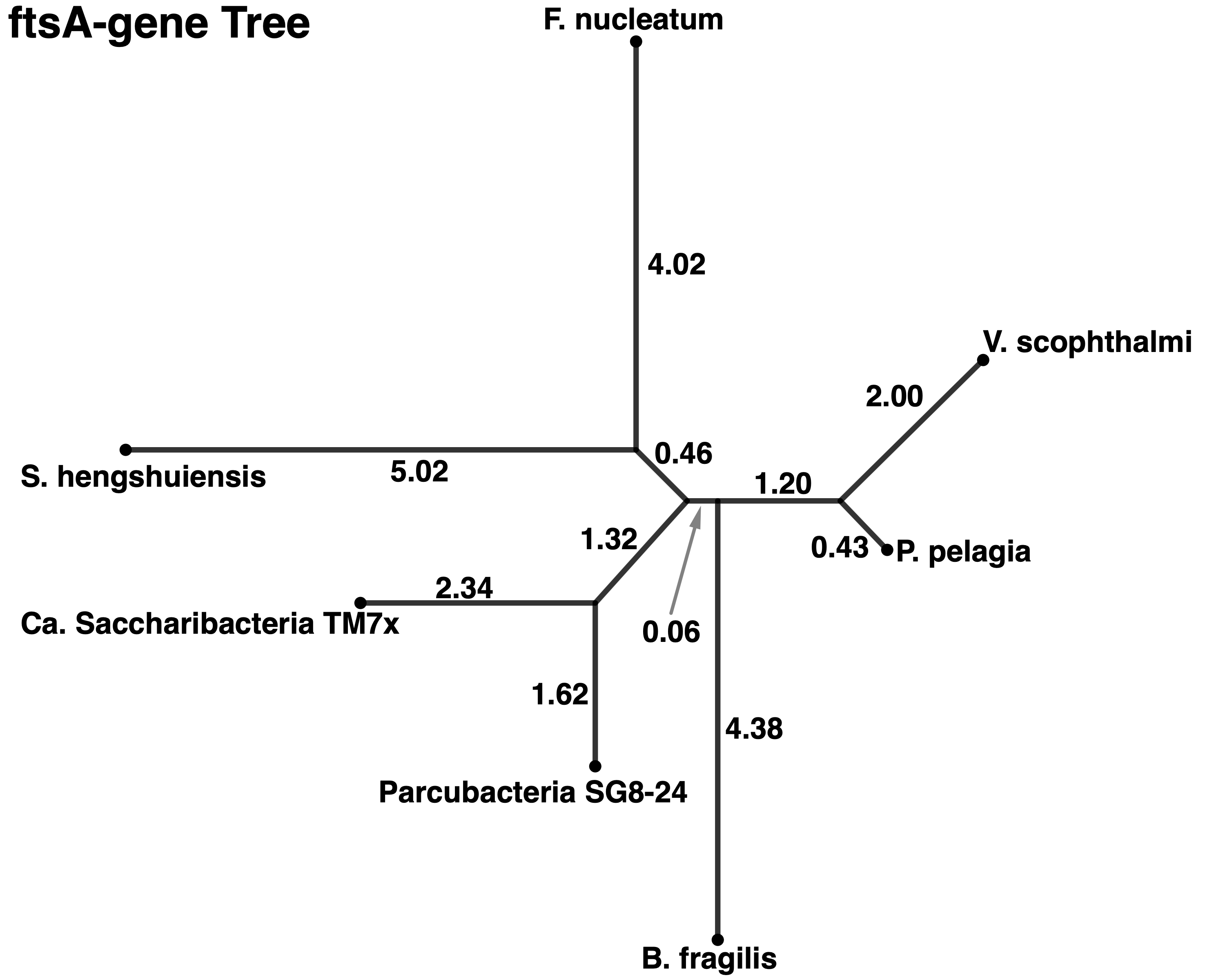}
    \includegraphics[width=0.4\textwidth]{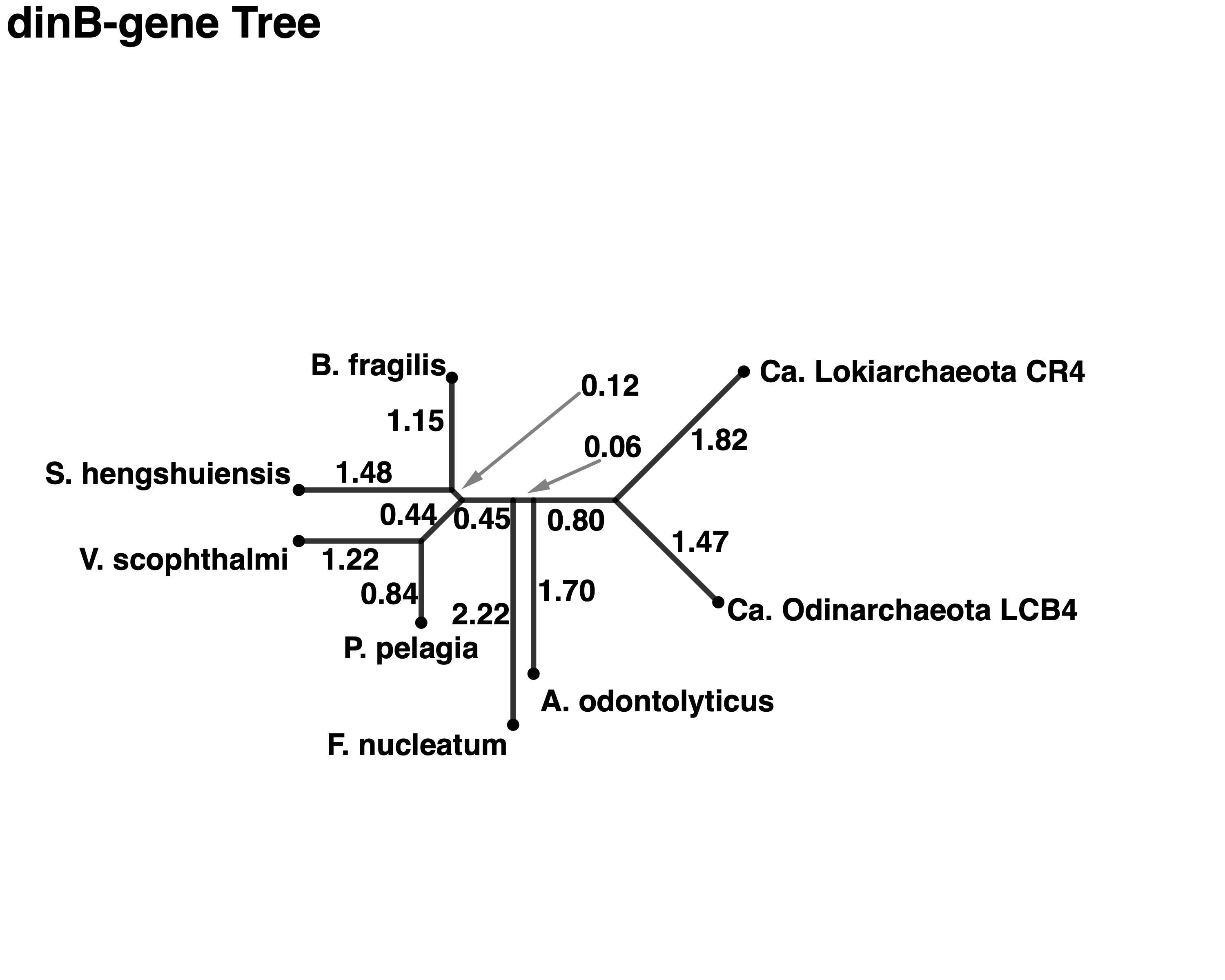}}
    \subfigure[]{\includegraphics[width=0.85\textwidth]{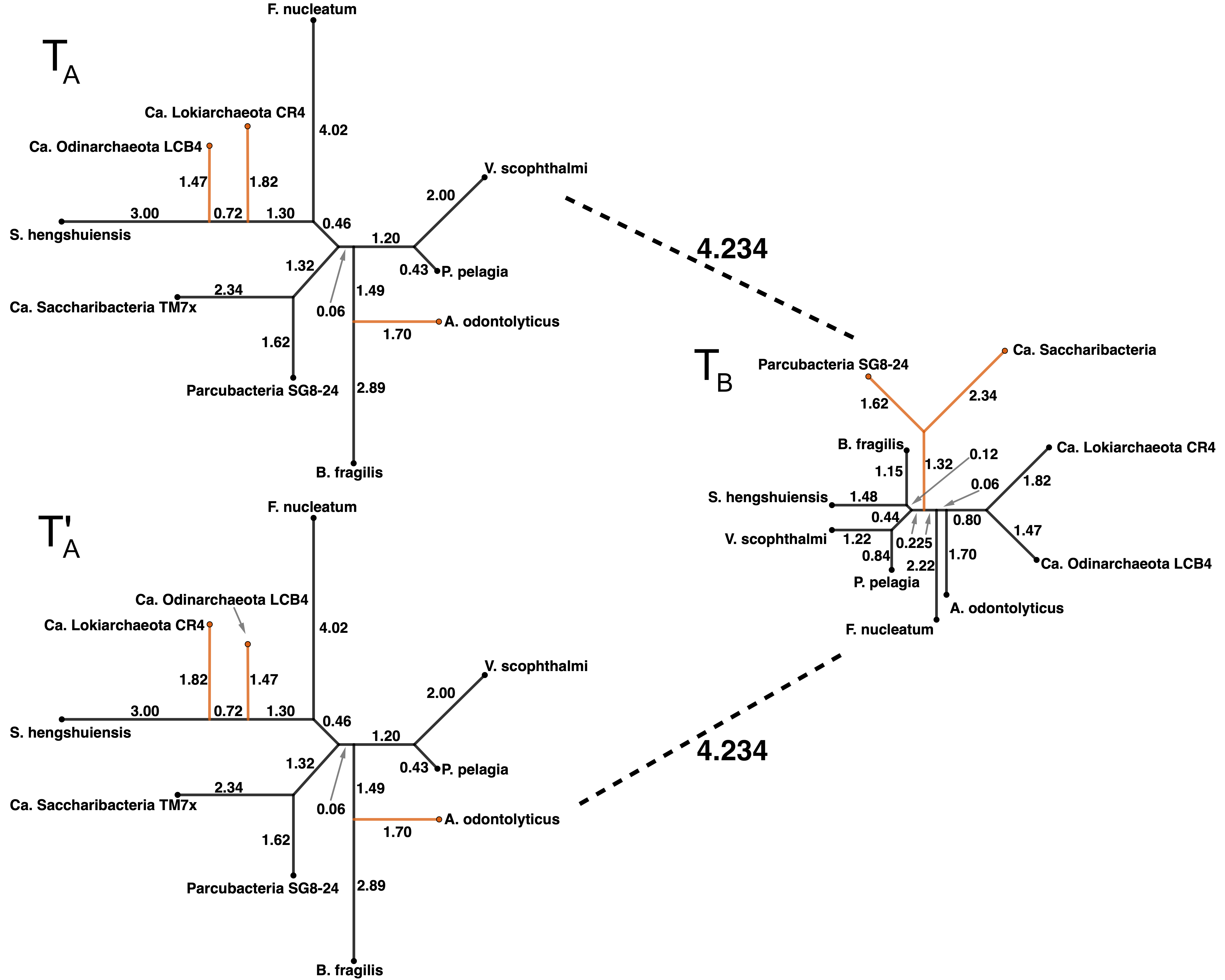}}
    \subfigure[]{\includegraphics[width=0.5\textwidth]{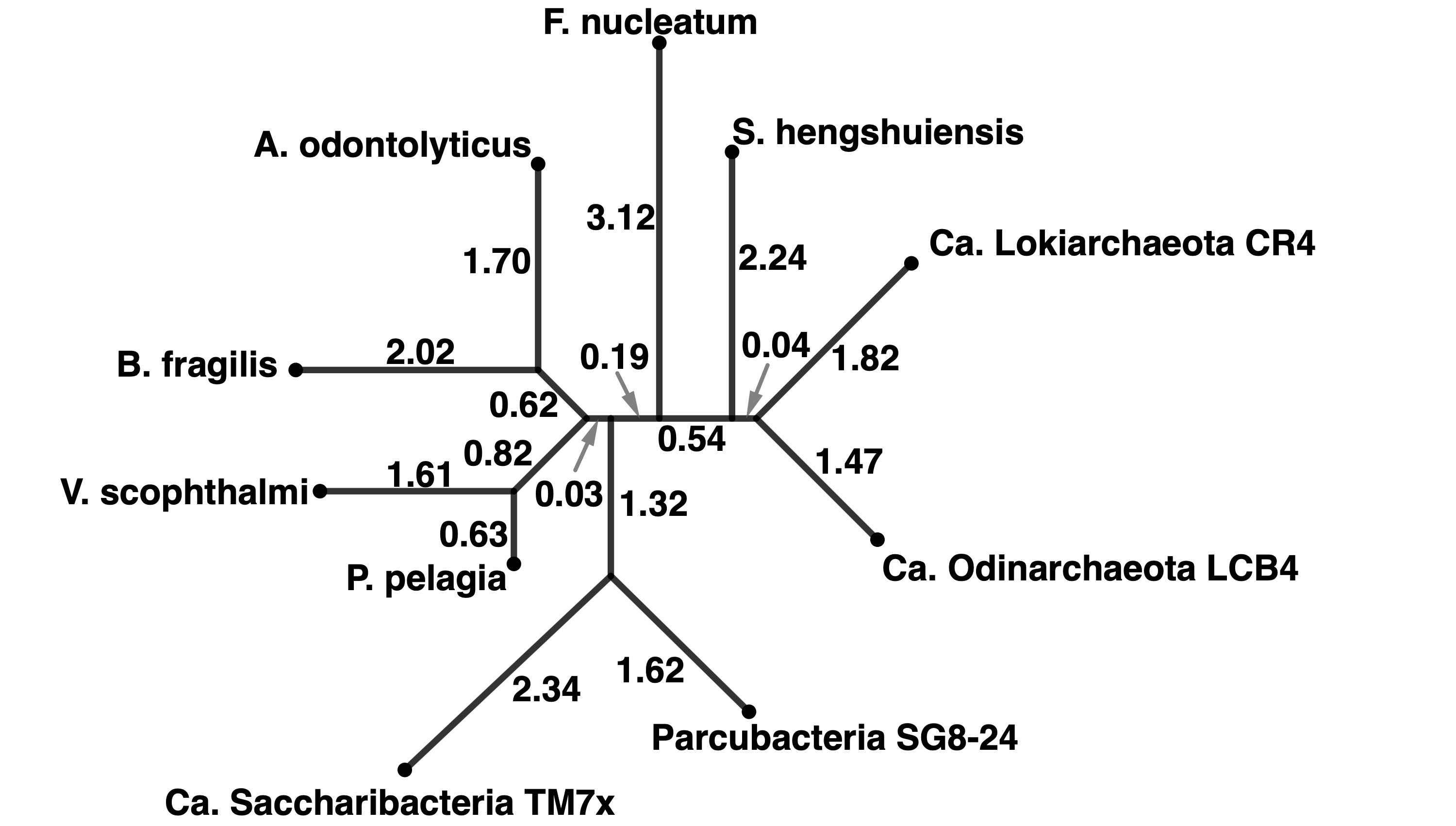}}
    \caption{(a) The estimated evolutionary history of the \textit{ftsA} (left) and \textit{dinB} (right) genes for 10 organisms. (b) The minimal distance between the extension spaces of these trees is 4.234, which can be obtained via two geodesic paths. The two tree pairs $(T_{A},T_{B})$ and $(T'_{A},T_{B})$ that achieve the minimal distance are shown. (c) The midpoint of the geodesics between $(T_{A},T_{B})$ and $(T'_{A},T_{B})$. This midpoint is the same for both geodesics.}
\end{figure}

\section{Discussion}
\label{sec:conclusions}

Extension spaces, first introduced by \cite{GrindstaffGillian2019RoPL}, provide an intuitive approach to contextualizing phylogenetic trees with reduced leaf sets in higher-dimensional BHV spaces. In this paper, we propose to define distances between trees with non-identical leaf sets by finding the minimum distance between their extension spaces, and we developed a reduced gradient algorithm to calculate this distance. A major advantage of this approach is that it gives a measure of dissimilarity that can be applied to \textit{any} two trees. It therefore addresses some of the limitations of the \cite{GrindstaffGillian2019RoPL}, such as that the trees under comparison must share common orthants in their extension spaces, and must have all internal branches of strictly positive length. 

An additional advantage of our approach is that it enables construction of a ``supertree'' that summarizes a pair of trees with respect to topological and edge length differences, even when those trees have non-identical leaf sets. This suggests a measure of compatibility among a collection of trees as the Fr\'echet means of their extension spaces $E_{T_1}^\mathcal{N}, E_{T_2}^\mathcal{N}, \ldots, E_{T_r}^\mathcal{N}$, which for $r=2$ reduces to a midpoint along a geodesic. Note that as minimal distance paths between extension spaces are not necessarily unique, Fr\'echet means of extension spaces are also not necessarily unique. Interestingly, in our applied data example, while there were two minimal distance paths between our trees, both paths had the same midpoint. We conjecture that in applied data examples, Fr\'echet means of extension spaces may often be unique. We leave the construction and study of Fr\'echet means to future work. 

While finding the minimal BHV distance between extension spaces is highly intuitive, our proposed distance is not formally a metric between trees  \citep[Section 3.4]{GrindstaffGillian2019RoPL}.  
Specifically, distinct trees can have intersecting extension spaces  \citep[Example 4.1]{GrindstaffGillian2019RoPL}, and therefore a zero distance, violating positivity. Furthermore, it is possible to find trees $T_1, T_2$ and $T_3$ for which $d(E^{\mathcal{N}}_{T_1}, E^{\mathcal{N}}_{T_3}) = 0$ and $d(E^{\mathcal{N}}_{T_2}, E^{\mathcal{N}}_{T_3}) = 0$, but for which $d(E^{\mathcal{N}}_{T_1}, E^{\mathcal{N}}_{T_2}) > 0$, thus violating the triangle inequality  \citep[Remark 3.6]{GrindstaffGillian2019RoPL}. 
Despite this, our distance still provides a useful measure of similarity between phylogenies with non-identical leaf sets, and when $\mathcal{L}_1 = \mathcal{L}_2 = \mathcal{N}$, our distance reduces to the classical BHV distance. In addition, the algorithm we developed is broadly applicable to the minimization of any convex function over a subset of a BHV space defined by linear constraints (see also \cite{MillerOwenProvan}), which could be broadly useful in other mathematical or computational phylogenetics problems. 

In practice, our algorithm runs within an hour on a modern laptop for up to 10 total leaves without multithreading. Alas, computation time grows quickly in the total number of leaves. For example, if the largest example in Table 1 had one more leaf not included in the second tree (increasing $|\mathcal{L}_1|$ to $9$ and $|\mathcal{L}_1 \cup \mathcal{L}_2|$ to $11$), the number of orthant pairs would grow by a factor of $\sim 22$, from 418,275 to 9,298,575, for which we estimate a single-thread runtime of $\sim 24$ hours. That said, the method can be trivially parallelized across orthant pairs, making it well-suited to distributed computing environments. While we report single-thread runtimes for transparency, our open-source software package implements multi-threading, conveniently accelerating the method for typical (non-distributed) computing environments. 

Because extension spaces can be characterized within a given orthant as a linear system of equations, our algorithm employs a reduced gradient method. Reduced gradient methods are iterative procedures, and therefore their computational complexity is challenging to characterize. That said, in practice, we find that the number of iterations per orthant pair is generally low, with 50\% converging with a gradient of $<10^{-8}$ within 4-6 iterations and 90\% converging within 6-12 iterations. Each iteration, however, involves the computation of multiple 
BHV geodesics. As computing a geodesic is $O(n^4)$-time  \citep[Theorem 3.5]{OwenMegan2011}, each 
iteration of Algorithm \ref{alg:OrthantExtensionDistance} 
is $O(n^4 d)$, where $d$ is the number of iterations required for the line search. 
The number of orthant pairs grows at $O(|\mathcal{N}|^{c})$, 
where $c$ is the number of leaves to be added to the trees,
further contributing to the runtime of the algorithm. Unsurprisingly, in practice, we find that the number of orthant pairs, rather than the geodesic computations, is the major limiting factor in calculating our distance. 
As a result, future work to accelerate computation could consider excluding suboptimal orthant pairs from consideration, such as by excluding highly dissimilar topologies while prioritizing the attachment of inconsequential edges to long edges.
That said, as previously mentioned, orthant pair comparisons can be parallelized across distributed computing architecture, reducing the in-practice computation time by a factor equal to the number of machines available. 



\appendix

\appendixone

\section*{Algorithm: additional details}
\label{sec:AlgDets}

Here we provide additional details pertaining to Algorithm \ref{alg:OrthantExtensionDistance}. 

\textbf{Reduced gradient directions}: 
The direction of change will be based on the gradient of $\delta$ at the current point $\dot{\mathbf{x}}^{t}$ when all partial derivatives in \eqref{equation:PartialDerivativesExpression} are well-defined, and a subgradient otherwise. To compute this (sub)gradient, we first find the support for the geodesic from $T'_1(\dot{\mathbf{x}}^{t})$ to $T'_2(\dot{\mathbf{x}}^{t})$, and use its support to determine the entries of the (sub)gradient. 
We then focus on the reduced gradient within the current facet, 
which is
$\nabla \varphi(\mathbf{x}_{\mathbf{F}}) = \nabla_{\mathbf{F}} f(\mathbf{x}) - A_{\mathbf{F}}^{\top} \left[A_{\mathbf{D}}^{-1}\right]^{\top} \nabla_{\mathbf{D}} f(\mathbf{x})$ for the general case. 
Due to the structure of $\dot{\mathbf{M}}$, 
the partial derivative corresponding to a free variable $\dot{\mathbf{x}}_{j}^{t}$, $j \in \mathbf{F}$, is $\frac{\partial \varphi (\dot{\mathbf{x}}_{\mathbf{F}})}{ \varphi \dot{\mathbf{x}}_{j}^{t}} = \nabla_{j} \delta(\dot{\mathbf{x}}) - \nabla_{j'} \delta(\dot{\mathbf{x}}) $, where $j' \in \mathbf{D}$ is the only index in the dependent variable set such that $\dot{\mathbf{M}}[i,j] = \dot{\mathbf{M}}[i,j'] = 1$. 

There are multiple ways to determine a good direction of change for the free variables at $\dot{\mathbf{x}}_{\mathbf{F}}^{t}$. We employed the conjugate gradient method  \citep[Section 5.5]{NLoptimization} because of its simplicity and the potential gain in efficiency. In this method, the main driver for the direction of change is the (sub)gradient at the current point, but after the first iteration a correction is added to increase efficiency. The correction loses its advantages and new directions become inefficient after many iterations  \citep[Section 5.5.2]{NLoptimization}, thus we re-initialize the correction regularly. The direction of change $\mathbf{d}_{\mathbf{F}}^{t}$ for the free variables is computed by
\begin{equation}
    \label{eq:ConjugateDirection}
    \mathbf{d}_{\mathbf{F}}^{t} = \begin{cases}
        - \nabla \varphi(\dot{\mathbf{x}}^{t}_{\mathbf{F}}) & \text{ if } c_{\text{conj}} = 1\\
        - \nabla \varphi(\dot{\mathbf{x}}^{t}_{\mathbf{F}}) + \frac{\langle \varphi(\dot{\mathbf{x}}^{t}_{\mathbf{F}}), \varphi(\dot{\mathbf{x}}^{t}_{\mathbf{F}}) - \varphi(\dot{\mathbf{x}}^{t-1}_{\mathbf{F}}) \rangle}{\left|\left|\varphi(\dot{\mathbf{x}}^{t-1}_{\mathbf{F}}) \right|\right|^{2}} \mathbf{d}_{\mathbf{F}}^{t-1} & \text{ if } c_{\text{conj}} > 1,
    \end{cases}
\end{equation}
for a counter $c_{\text{conj}}$. In our method, $c_{\text{conj}}$ is re-initialized (reset to 1) every time a new facet is reached (i.e. $\mathbf{F}$ is re-defined and $\mathbf{d}_{\mathbf{F}}^{t-1}$ is no longer of the same dimension) or when $c_{\text{conj}} > 15$. The threshold of 15 was recommended by  \citep[Page 248]{NLoptimization}, and we found it to work well in our setting in practice. 
Given the direction of change for the free variables $\mathbf{d}_{\mathbf{F}}^{t}$, we can find the direction of change for the dependent variables as $\mathbf{d}^{t}_{\mathbf{D}} = - \dot{\mathbf{M}}_{\mathbf{D}}^{-1} \dot{\mathbf{M}}_{\mathbf{F}} \mathbf{d}_{\mathbf{F}}^{t}$ and for the null variables as $\mathbf{d}^{t}_{\mathbf{N}} = 0$. 

\textbf{Selecting step sizes}: 
Given a non-zero direction of change $\mathbf{d}^{t}$, we now discuss finding the best next point on the line segment $\dot{\mathbf{x}}^{t} + \tau \mathbf{d}^{t}$ such that both $\tau \geq 0$ and $\dot{\mathbf{x}}^{t} + \tau \mathbf{d}^{t} \geq 0$. Let $\tau_{\text{max}}$ be the maximum value of $\tau$ such that  $\dot{\mathbf{x}}^{t} + \tau \mathbf{d}^{t} \geq 0$, which is finite because $\dot{\mathbf{M}} \left(\dot{\mathbf{x}}^{t} + \tau \mathbf{d}^{t}\right) = \dot{\mathbf{v}}$ implies $\dot{\mathbf{M}}\mathbf{d}^{t} = 0$, and thus some entry of $\mathbf{d}^{t}$ is negative. 
Thus, we are looking for $\tau \in [0, \tau_{\text{max}}]$ that minimizes $\delta(\dot{\mathbf{x}}^{t} + \tau \mathbf{d}^{t})$. 


Using the chain-rule, the derivative of $\delta(\dot{\mathbf{x}}^{t} + \tau \mathbf{d}^{t})$ with respect to $\tau$ is 
\begin{equation}
    h(\tau) = \frac{\partial \delta(\dot{\mathbf{x}}^{t} + \tau \mathbf{d}^{t})}{ \partial \tau} = \langle \nabla \delta(\dot{\mathbf{x}}^{t} + \tau \mathbf{d}^{t}), \mathbf{d}^{t} \rangle.
\end{equation}
The function $\tau \mapsto \delta(\dot{\mathbf{x}}^{t} + \tau \mathbf{d}^{t})$ is convex and by construction of $\mathbf{d}^t$, $h(0) < 0$. 
From this, we employ a derivative-based bisection method to reach the minimum of this function. We start by checking the value of $h(\tau_{\text{max}})$. If $h(\tau_{\text{max}}) \leq 0$, then we have reached the minimum at $\tau_{0} = \tau_{\text{max}}$. Otherwise, we search for the value $\tau_{0} \in [0, \tau_{\max}]$ such that $h(\tau_{0}) = 0$ as follows 

\begin{enumerate}
    \item Initialize $\tau_{\text{left}} = 0$ and $\tau_{\text{right}} = \tau_{\text{max}}$.
    \item Take $\tau^{*} = \frac{\tau_{\text{left}} +  \tau_{\text{max}}}{2}$. 
    \item Evaluate $h(\tau^{*})$: 
    \begin{itemize}
        \item If $h(\tau^{*}) = 0$, return $\tau^{*}$.
        \item If $h(\tau^{*}) < 0$, update $\tau_{\text{left}} = \tau^{*}$ and return to step 2. 
        \item If $h(\tau^{*}) > 0$, update $\tau_{\text{right}} = \tau^{*}$ and return to step 2. 
    \end{itemize}
\end{enumerate}
In practice, we do not require $h(\tau^{*}) = 0$ exactly; instead, we require $|h(\tau^{*})|$ to be below a threshold. We find that $|h(\tau^{*})| < 10^{-16}$ works well in practice. 
This threshold can be altered in our software via the flag \texttt{-Tol2}.
After finding $\tau_{0}$, we select the next point as $\dot{\mathbf{x}}^{t+1} = \dot{\mathbf{x}}^{t} + \tau_{0} \mathbf{d}^{t}$.


\textbf{Thresholds for convergence}: 
Our iterative algorithm is guaranteed to converge to stationary points with zero (sub)gradients. However, in practice, we employ a tolerance threshold to find a solution that is sufficiently close to a stationary point in a finite number of iterations. 
Specifically, before computing a new direction of change, we test 
if we have reached the global minimum by checking if every entry in $\nabla \varphi(\dot{\mathbf{x}}_{\mathbf{F}})$ is less than a given threshold \texttt{Tol1}. In practice, we find that choosing this threshold to be $10^{-8}$  produces good performance. The user can alter this value using the flag \texttt{-Tol1}.
\bibliographystyle{biometrika}

\bibliography{references}
\end{document}